\begin{document}

\newtheorem{definition}{\bf ~~Definition}
\newtheorem{observation}{\bf ~~Observation}
\newtheorem{lemma}{\bf ~~Lemma}
\newtheorem{proposition}{\bf ~~Proposition}
\newtheorem{remark}{\bf ~~Remark}
\renewcommand{\algorithmicrequire}{ \textbf{Input:}} 
\renewcommand{\algorithmicensure}{ \textbf{Output:}} 
\newcommand*{\TitleFont}{%
      \fontsize{18}{18}%
      \selectfont}
\newcommand*{\AuthorFont}{%
      \fontsize{12}{12}%
      \selectfont}
\title{\TitleFont UAV-to-Device Underlay Communications: Age of Information Minimization by Multi-agent Deep Reinforcement Learning}

\author{
\IEEEauthorblockN{
{\AuthorFont Fanyi Wu}, \IEEEmembership{\AuthorFont Student Member, IEEE},
{\AuthorFont Hongliang Zhang}, \IEEEmembership{\AuthorFont Member, IEEE},
{\AuthorFont Jianjun Wu},\\
{\AuthorFont Lingyang Song}, \IEEEmembership{\AuthorFont Fellow, IEEE},
{\AuthorFont Zhu Han}, \IEEEmembership{\AuthorFont Fellow, IEEE},
{\AuthorFont and H. Vincent Poor}, \IEEEmembership{\AuthorFont Fellow, IEEE}}\\

\vspace{-1.0cm}

\thanks{F.~Wu, J.~Wu, and L.~Song are with Department of Electronics Engineering, Peking University, Beijing 100871, China (email: fanyi.wu@pku.edu.cn, just@pku.edu.cn, lingyang.song@pku.edu.cn).}
\thanks{H.~Zhang is with Department of Electronics Engineering, Peking University, Beijing 100871, China, and also with Department of Electrical and Computer Engineering, University of Houston, Houston, TX 77004, USA (email: hongliang.zhang92@gmail.com).}
\thanks{Z.~Han is with Department of Electrical and Computer Engineering, University of Houston, Houston, TX 77004, USA, and also with Department of Computer Science and Engineering, Kyung Hee University, Seoul 02447, South Korea (email: hanzhu22@gmail.com).}
\thanks{H. V. Poor is with Department of Electrical Engineering, Princeton University, Princeton, NJ 08544, USA (e-mail: poor@princeton.edu).}

}

\maketitle

\vspace{-1.0cm}

\begin{abstract}
In recent years, unmanned aerial vehicles~(UAVs) have found numerous sensing applications, which are expected to add billions of dollars to the world economy in the next decade. To further improve the Quality-of-Service~(QoS) in such applications, the 3rd Generation Partnership Project~(3GPP) has considered the adoption of terrestrial cellular networks to support UAV sensing services, also known as the cellular Internet of UAVs. In this paper, we consider a cellular Internet of UAVs, where the sensory data can be transmitted either to the base station~(BS) via cellular links, or to the mobile devices by underlay UAV-to-Device (U2D) communications. To evaluate the freshness of the sensory data, the age of information~(AoI) is adopted, in which a lower AoI implies fresher data. Since UAVs' AoIs are determined by their trajectories during sensing and transmission, we investigate the AoI minimization problem for UAVs by designing their trajectories. This problem is a Markov decision problem~(MDP) with an infinite state-action space, and thus we utilize multi-agent deep reinforcement learning~(DRL) to approximate the state-action space. Then, we propose a multi-UAV trajectory design algorithm to solve this problem. Simulation results show that our proposed algorithm can achieve a lower AoI than the greedy algorithm and the policy gradient algorithm.
\end{abstract}

\begin{IEEEkeywords}
UAV-to-Device communication, cellular Internet of UAVs, age of information, multi-agent deep reinforcement learning
\end{IEEEkeywords}

\newpage

\section{Introduction}%
\label{Introduction}

As an emerging facility with high mobility and low operational cost~\cite{HJJYG-2018,HLZ-2020}, the unmanned aerial vehicle~(UAV) has been applied to provide sensing services in a wide range of areas, including road traffic monitoring~\cite{KGMK-2013}, forest fire surveillance~\cite{CZY-2017}, and industrial facility inspection~\cite{JMJSCR-2013}. Following the announcement of the Federal Aviation Administration~(FAA) in 2019, the UAV is anticipated to generate tens of billion dollars to the world economy~\cite{FAA}. In the current unmanned aircraft systems~(UASs), UAVs transmit their sensory data to terrestrial mobile devices over the unlicensed spectrum~\cite{RJW-2013}, in which the interference from surrounding terminals is uncontrollable due to the opportunistic unlicensed spectrum access. To ensure the Quality-of-Services~(QoS) in such applications, the 3rd Generation Partnership Project~(3GPP) has considered the adoption of terrestrial cellular networks to support UAV sensing services~\cite{3GPP_TR_36_777}, which is also referred to as the cellular Internet of UAVs~\cite{HLZH-2019}. In the cellular Internet of UAVs, UAVs can transmit their sensory data to the base stations~(BSs) and corresponding mobile devices for different applications over the licensed spectrum, which are respectively known as the UAV-to-Network~(U2N) and the UAV-to-Device~(U2D) communications.

In some sensing applications, the conditions of sensing targets change rapidly, and thus UAVs are required to perform sensing and transmission continuously to keep their sensory data up-to-date. To evaluate the freshness of sensory data, the age of information~(AoI) is adopted as one metric~\cite{SRM-2012}. Specifically, the AoI of a UAV is defined as the elapsed time after the latest successful transmission of the valid sensory data. When a UAV has a high AoI, its latest sensory data may be inconsistent with the current condition of its target. Consequently, the UAV has the incentive to minimize its AoI. Furthermore, due to the limited sensing range of onboard sensors, a UAV tends to fly close to its target for successful sensing. However, this UAV may suffer low throughput as it moves far from the BS or its mobile device, which leads to long duration of sensory data  transmissions. Therefore, the AoI of a UAV is jointly determined by its trajectory during sensing and transmission. With the aim to minimize the AoI, it is necessary to design the trajectory of the UAV.

In this paper, we consider an orthogonal frequency division multiple access~(OFDMA) cellular Internet of UAVs, in which UAVs can transmit their sensory data to the BS through U2N links, or directly to their mobile devices through U2D links. To make full use of the spectrum resource, U2D communications work as an underlay to U2N ones. Since UAVs' AoIs are jointly influenced by their trajectories during sensing and transmission, we aim to minimize UAVs' AoIs by designing their trajectories. Furthermore, in the system, UAV sensing and transmission are coupled with each other. Besides, due to the underlay property, different UAVs' trajectories may influence on each other. Therefore, it is challenging to investigate the AoI minimization problem in our system.

To tackle with this challenge, we design a joint sensing and transmission protocol to schedule UAVs performing sensing tasks.  Based on the proposed protocol, UAV sensing and transmission can be formulated as the state transitions in the Markov chains. Consequently, the AoI minimization problem can be regarded as a Markov decision problem~(MDP)~\cite{RA-1998}. Since UAVs' states and actions in this MDP are continuous-valued, the state-space is infinite. Besides, as the interference among UAVs is not observable, traditional model-based methods can not be leveraged to tackle with this MDP. Therefore, we adopt multi-agent deep reinforcement learning~(DRL)~\cite{KMMA-2017,JHLZH-publish} to solve this problem, and propose a multi-UAV trajectory design algorithm based on deep deterministic policy gradient~(DDPG)~\cite{TJANTYDD-2016} method to optimize the policies for UAVs.

In the literature, several works have investigated the UAV communications in the cellular Internet of UAVs.
Specifically, authors in~\cite{SHBL2-2018,AK-2018} focused on the U2N communications. In~\cite{SHBL2-2018}, the authors  investigated a cellular Internet of UAVs consisting of single UAV, and maximized the energy efficiency in the network by jointly optimizing the UAV's trajectory and transmission power. In~\cite{AK-2018}, the authors studied a cellular network including one UAV performing sensing tasks, and modeled statistical behavior of the channel from the BS to the UAV based on extensive experimental data measurements.
Moreover, the authors in~\cite{SHBL3-2018,ZJLKG-2018} studied the UAV-to-UAV~(U2U) communications, in which multiple UAVs
can communicate with each other directly. In~\cite{SHBL3-2018}, the authors jointly optimized subchannel allocation and flying speed for a cellular Internet of UAVs, where UAVs can not only communicate with the BS through the U2N links, but also build U2U links with other UAVs. In~\cite{ZJLKG-2018}, the authors evaluated the reliability and polling delay for UAV swarms in a cellular network where these UAVs execute the sensing tasks cooperatively.
However, as an important practical scenario in the cellular Internet of UAVs, the direct communications between UAVs and mobile devices, namely the U2D communications, are lack of considerations in the current works. Therefore, we propose to enable U2D communications into the cellular Internet of UAVs in this paper.

The main contributions of this paper can be summarized as follows:
\begin{itemize}
  \item We propose the underlay U2D communications in the cellular Internet of UAVs to improve the QoS for UAV sensing services, and design a joint sensing and transmission protocol to enable U2D communications.
  \item We investigate the AoI minimization problem for UAVs in the cellular Internet of UAVs using multi-agent DRL, and then propose a DDPG-based multi-UAV trajectory design algorithm to solve this problem.
  \item Simulation results show that our proposed algorithm can achieve a lower AoI in the cellular Internet of UAVs than the greedy algorithm and the policy gradient algorithm.
\end{itemize}

The rest of our paper is organized as follows. In Section~\ref{System Model}, we describe the model of
cellular Internet of UAVs, in which U2D communications work as an underlay of U2N ones. In Section~\ref{Joint Sensing and Transmission Protocol}, we design a joint sensing and transmission protocol to schedule UAVs performing sensing tasks. Then, we investigate the AoI minimization problem for UAVs in the system, and reformulate it under multi-agent reinforcement learning~(RL) framework in Section~\ref{AoI Minimization Problem Formulation}. After that, we adopt multi-agent DRL to analyze this problem, and propose a DDPG-based multi-UAV trajectory design algorithm to solve it in Section~\ref{Algorithm Design by Multi-agent DRL}. In Section~\ref{System Performance Analysis}, we analyze the convergence and the complexity of our proposed algorithm, and remark some properties on UAVs' AoIs. Simulation results are presented in Section~\ref{Simulation Result}. Finally, Section~\ref{Conclusion} concludes this paper.

\section{System Model}
\label{System Model}

\begin{figure}[!t]
\centering
\includegraphics[width=6.0in]{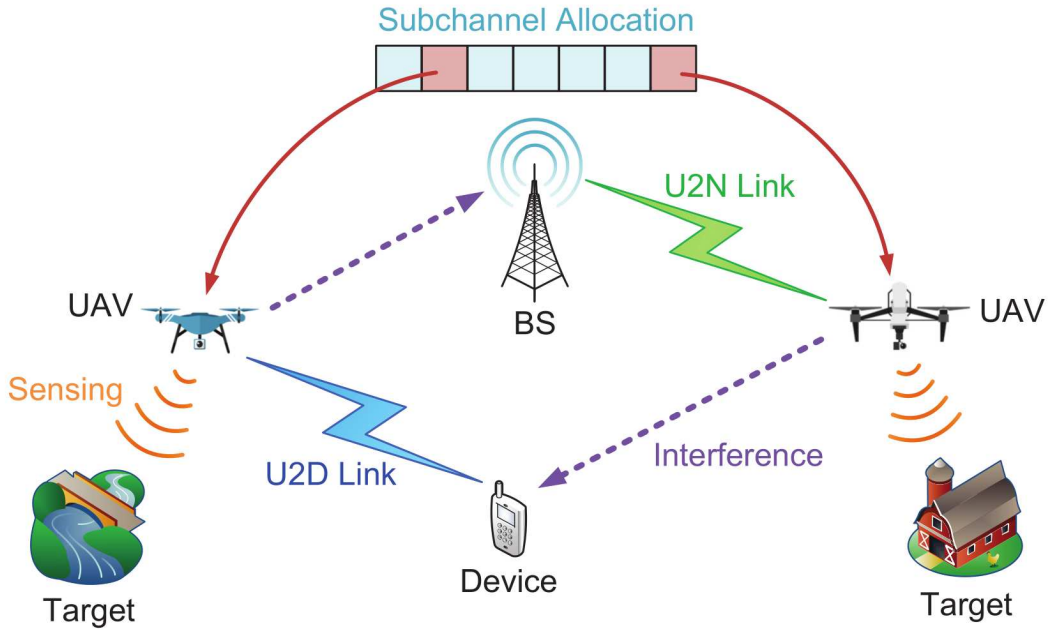}
\vspace{-10mm}
\caption{System model for the cellular Internet of UAVs.}
\vspace{-7mm}
\label{system_model}
\end{figure}

As illustrated in Fig.~\ref{system_model}, we consider an OFDMA cellular Internet of UAVs with one BS and multiple UAVs. Each UAV is required to execute its sensing task involving one target, and then transmits the sensory data to the BS or corresponding mobile device for further processing. We assume that the system includes $M$ UAVs which transmit their data to the BS, denoted by $\mathcal {M} = \{1,2,...,M\}$. There also exist $N$ UAVs which send the data to their mobile devices, denoted by $\mathcal {N} = \{M+1,M+2,...,M+N\}$. Here, $M$ and $N$ are constants which will be determined by the UAV sensing applications. To support the sensing applications, there exist two transmission modes:
\begin{itemize}
  \item \textbf{U2N mode}: The UAV transmits the sensory data to the BS via cellular communications;
  \item \textbf{U2D mode}: The UAV transmits the sensory data to its corresponding mobile device directly.
\end{itemize}
Moreover, the UAVs perform sensing and transmission in a synchronized manner, where the minimum time unit in the system is defined as \emph{frame}. As such, both sensing and transmission of a UAV can be characterized by frames.

We assume that the system owns $K$ orthogonal subchannels to support the data transmissions of UAVs, which are denoted by $\mathcal {K} = \{1,2,...,K\}$. To guarantee the QoS, the BS assigns one exclusive subchannel to each U2N link, and thus the severe mutual interference among U2N transmissions can be avoided. For the sake of fairness, each subchannel can only be occupied by one UAV in the U2N mode at a time. Besides, to make full use of the spectrum resource,  U2D communications work as an underlay to U2N ones, i.e., U2D links are allowed to share subchannels with U2N links. We assume that the BS only allocates one subchannel to each UAV in the U2D mode at a time, while each subchannel can be occupied by at most $V_s$ UAVs for U2D transmissions. To specify the subchannel allocation result, we define a $(M+N)\times K$ binary matrix $\Phi= [\phi_{i,k}]$, $(i \in \mathcal {M} \cup \mathcal {N}, k \in \mathcal {K})$, for all UAVs in the system, in which the indicator $\phi_{i,k}=1$  if the $k$-th subchannel is allocated to the $i$-th UAV; otherwise, $\phi_{i,k}=0$.

We describe the locations of the BS, UAVs, targets, and mobile devices by 3D cartesian coordinates. To be specific, the BS is located at $\textbf{\emph{x}}_0 = (0, 0, H_0)$, in which $H_0$ denotes its height. In addition, the $i$-th UAV $( i \in \mathcal {M} \cup\mathcal {N} )$ is located at $\textbf{\emph{x}}_i = (x_i, y_i, h_i)$, whose target has the coordinate $\textbf{\emph{x}}^t_i = (x^t_i, y^t_i, 0)$. Moreover,  the $j$-th mobile device\footnote{The $j$-th mobile device refers to the corresponding mobile device of the $j$-th UAV.}~$( j \in \mathcal {N} )$ is located at $\textbf{\emph{x}}^d_j = (x^d_j, y^d_j, 0)$.

\subsection{UAV Sensing}

Each UAV is equipped with onboard sensors to sense its target. However, due to the mechanical limitations of onboard sensors, the sensing is not always successful. In this paper, we adopt the probabilistic sensing model in~\cite{JHL-2019,SHBL2-2018} to evaluate the sensing qualities of UAVs. More explicitly, the successful sensing probability~(SSP) for a UAV can be expressed as an exponential function of the distance between the UAV and its target. When the $i$-th UAV $( i \in \mathcal {M} \cup\mathcal {N} )$ sense its target for one frame, its SSP can be expressed by
\begin{equation}\label{SSP}
 \mathcal{P}^{ss}_i (\textbf{\emph{x}}_i )= \left\{ {\begin{array}{*{20}{cc}}
e^{-\lambda t_f  d^t_i} , &\quad d^t_i  \sin\varphi \leq r_{s,i},\\
0,& \quad \textmd{otherwise},
\end{array}} \right.
\end{equation}
where $\lambda$ is the \emph{sensing factor} evaluating the sensing performance, $t_f$ is the duration of a frame, and $d^t_i = \|\textbf{\emph{x}}_i - \textbf{\emph{x}}^t_i\|_2$ is the distance from the UAV to its target. Here, $\|\cdot\|_2$ denotes the Euclidean distance. Moreover, $\varphi$ is the \emph{maximum sensing angle} for UAV onboard sensors, and $r_{s,i} = h_i \tan\varphi$ denotes the \emph{sensing range}, namely the maximum horizon distance between the UAV and its target satisfying $\mathcal{P}^{ss}_i>0$.

When a UAV successfully senses its target, the sensory data is defined as \emph{valid}. Due to the limited onboard computational capability, a UAV cannot figure out whether its sensory data is valid or not by itself. However, after receiving the sensory data from the UAV, the BS or the mobile device can judge whether the sensing is successful or not. As such, the sensing quality of a UAV can still be evaluated based on~(\ref{SSP}).

\subsection{UAV Transmission}
\label{Transmission Model}

As UAVs fly at a high altitude, the line-of-sight~(LoS) components usually exist in the data transmissions of UAVs. Therefore, the channel characteristics of air-to-ground communications are different from that in traditional terrestrial communications. In this paper, we adopt the channel model in~\cite{3GPP_TR_38_901, 3GPP_TR_36_777} to evaluate the data transmissions in the U2N and the U2D modes.

\subsubsection{U2N Mode}
Since the spectrum resource is orthogonally utilized for U2N communications, the mutual interference among U2N links can be avoided. However, due to the underlay property of U2D communications, a UAV in the U2N mode may still be interfered by the co-channel U2D links. Specifically, for the $i$-th UAV in the U2N mode~$( i \in \mathcal {M} )$ over the $k$-th subchannel, the received signal to interference plus noise ratio~(SINR) at the BS can be expressed as
\begin{equation}\label{U2N_SNR}
\gamma_{i,k} = \frac{ \phi_{i,k} P^u g_i \zeta_i }{N_0 + \sum\limits_{j \in \mathcal {N}} \phi_{j,k}  P^u g_j \zeta_j }.
\end{equation}
Here, $\phi_{i,k}$ and $\phi_{j,k}$ are the subchannel allocation indicators, $P^u$ denotes the transmit power of UAVs, and $N_0$ denotes the power of noise. Besides, $g_i = 10^{-\mathcal {L}_i/10}$ and $g_j = 10^{-\mathcal {L}_j/10}$ denote the channel gains from the $i$-th UAV and the $j$-th UAV to the BS, in which $\mathcal {L}_i$ and $\mathcal {L}_j$ denote the air-to-ground path losses, respectively. Moreover, $\zeta_i$ and $\zeta_j$ denote the small-scale fading coefficients.

To calculate the air-to-ground path losses and the small-scale fading coefficients, both the LoS and the none LoS~(NLoS) components should be considered. For the $i$-th UAV $( i \in \mathcal {M} \cup \mathcal {N} )$, the probability of the LoS component can be calculated by
\begin{equation}
\mathcal{P}^{LoS}_i=
\left\{
\begin{aligned}
{d_{c}}/{d_{2D}} + \left( 1 - {d_{c}}/{d_{2D}}\right)
e^{-{d_{2D}}/{p_0}},&\quad{d_{2D} \geq d_c},\\
1,\qquad\qquad\qquad&\quad\textmd{otherwise},\\
\end{aligned}
\right.
\end{equation}
in which $p_0 = 233.98 \textmd{lg}(h_i) - 0.95$, and $d_c = \textmd{max}\{ 294.05\textmd{lg}(h_i)-432.94,18\}$. Besides, $d_{2D}=\sqrt{(x_i)^2 + (y_i)^2}$ implies the 2D distance from the UAV to the BS. Then, the probability of the NLoS component for the $i$-th UAV can be given by $\mathcal{P}^{NLoS}_i = 1 - \mathcal{P}^{LoS}_i$. According to~\cite{3GPP_TR_36_777}, we can calculate the LoS and the NLoS path losses for the $i$-th UAV, denoted by $\mathcal {L}^{LoS}_i$ and $\mathcal {L}^{NLoS}_i$, respectively. Moreover, the LoS and the NLoS small-scale fading coefficients for the $i$-th UAV, denoted by $\zeta^{LoS}_i$ and $\zeta^{NLoS}_i$, obey Rice distribution and Rayleigh distribution, accordingly. More details on calculating the path loss and the small-scale fading coefficient are referred in~\cite{3GPP_TR_36_777}.

Therefore, based on the received SINR, the throughput of the $i$-th UAV~$( i \in \mathcal {M} )$ over the $k$-th subchannel is given by $R_{i,k} = \log_2\left(1+\gamma_{i,k}\right)$. Considering the QoS of data transmission, we assume that a transmission is successful only when the throughput exceeds a given threshold, denoted by $R_{th}$. Then, we can calculate the successful transmission probability~(STP) for U2N transmissions by the following proposition.

\begin{proposition}\label{proposition1}
The STP of the $i$-th UAV in the U2N mode~$( i \in \mathcal {M} )$ over the $k$-th subchannel is given by
\begin{equation}
 \mathcal{P}^{st}_{i,k}(R_{th})  =  \int_0^\infty \left\{  \mathcal{P}^{LoS}_i \cdot [1-F_{ri}(x)]+ \mathcal{P}^{NLoS}_i \cdot [1-F_{ra}(x)] \right\}  f_\chi(x) dx,
\end{equation}
where $F_{ri}(x) = 1 - Q_1(\sqrt{2K_{ri}},x \sqrt{2(K_{ri}+1)})$ and $F_{ra}(x) = 1 - e^{-x^2/2}$.
Besides, $f_\chi(x) = g(x-A_i)$, with $ g(x) = g_{\mathcal {W}[1]}(x) \ast \ldots \ast g_{\mathcal {W} [N_w]}(x)$, $A_i = \frac{N_0\kappa }{P^u g_i }$, and $\kappa = 2^{R_{th}}-1$. Here, $\mathcal {W} = \left\{ j \in \mathcal {N} | \phi_{j,k}=1  \right\}$ whose size is $N_w$, $\mathcal {W}[w]$ is the $w$-th element in $\mathcal {W}$,  $g_j(x)= ({1}/{B_i})  f_{\chi_j}({x}/{B_i})$, $B_i = \frac{ \kappa }{g_i}$, $f_{\chi_j}(y)= \mathcal{P}^{LoS}_j \cdot [({1}/{g_j}) f_{ri}(y/{g_j})] + \mathcal{P}^{NLoS}_j \cdot [({1}/{g_j}) f_{ra}(y/{g_j})]$,  $f_{ra}(y)=y e^{-y^2/2}$, and $f_{ri}(y)= 2\left(K_{ri}+1\right)y e^{ -\left(K_{ri}+1\right)y^2-K_{ri} } \cdot I_0\left(2\sqrt{(K_{ri}+1)K_{ri}} y\right)$.

\end{proposition}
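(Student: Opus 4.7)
The plan is to start from the SINR threshold condition and progressively translate it into a statement about the tail of a mixed Rice/Rayleigh random variable conditioned on a sum of independent, scaled Rice/Rayleigh random variables; the convolution representation in the claim is then just the PDF of that sum.

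First I would unpack the throughput constraint. Since $R_{i,k}=\log_2(1+\gamma_{i,k})\ge R_{th}$ is equivalent to $\gamma_{i,k}\ge \kappa=2^{R_{th}}-1$, plugging in the expression for $\gamma_{i,k}$ from~(\ref{U2N_SNR}) and using $\phi_{i,k}=1$ yields the small-scale-fading inequality
\begin{equation*}
\zeta_i \;\ge\; \frac{N_0\kappa}{P^u g_i} + \frac{\kappa}{g_i}\sum_{j\in\mathcal{W}} g_j\zeta_j \;=\; A_i + B_i\sum_{j\in\mathcal{W}} g_j\zeta_j,
\end{equation*}
where $\mathcal{W}=\{j\in\mathcal{N}\mid \phi_{j,k}=1\}$ is precisely the set of interfering U2D links. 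I would then introduce the aggregate interference variable $\chi = A_i + B_i\sum_{j\in\mathcal{W}} g_j\zeta_j$ so that $\mathcal{P}^{st}_{i,k}(R_{th})=\Pr(\zeta_i\ge \chi)$.

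Next I would condition on $\chi$ and use independence of $\zeta_i$ from the interferers' fading to write $\mathcal{P}^{st}_{i,k}(R_{th}) = \int_0^{\infty}\Pr(\zeta_i\ge x)\,f_\chi(x)\,dx$. The mixed LoS/NLoS channel model gives $\zeta_i$ as a mixture: with probability $\mathcal{P}^{LoS}_i$ it is Rician with CDF $F_{ri}$, and with probability $\mathcal{P}^{NLoS}_i$ it is Rayleigh with CDF $F_{ra}$. Hence $\Pr(\zeta_i\ge x)=\mathcal{P}^{LoS}_i[1-F_{ri}(x)]+\mathcal{P}^{NLoS}_i[1-F_{ra}(x)]$, which is exactly the bracketed expression in the claim.

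The remaining work is to identify $f_\chi$. Since $\chi$ is an affine transform of $\chi' = \sum_{j\in\mathcal{W}} g_j\zeta_j$, I would (i) shift by $A_i$, writing $f_\chi(x)=f_{\chi'/B_i\cdot B_i}(x-A_i)$ as $g(x-A_i)$; (ii) scale each term, noting that $g_j\zeta_j$ has PDF $f_{\chi_j}$ given by the same LoS/NLoS mixture with appropriate Jacobian $1/g_j$ applied to $f_{ri}$ and $f_{ra}$, and that multiplying by $B_i$ introduces the additional $1/B_i$ factor in $g_j(\cdot)$; and (iii) use the fact that the small-scale fading coefficients $\{\zeta_j\}_{j\in\mathcal{W}}$ are mutually independent, so the PDF of a sum is the convolution of the individual PDFs, giving $g=g_{\mathcal{W}[1]}*\cdots*g_{\mathcal{W}[N_w]}$. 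Assembling the three pieces reproduces the stated integral.

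The main obstacle is mostly bookkeeping rather than conceptual: keeping straight the two jacobians (one from scaling by $g_j$ inside each $f_{\chi_j}$, one from scaling by $B_i$ inside $g_j(\cdot)$) and making sure the support of $\chi\ge A_i$ is respected so that the lower integration limit can be replaced by $0$ using $g(x-A_i)=0$ for $x<A_i$. A secondary subtlety is justifying the mixture form for $f_{\chi_j}$: each summand's LoS/NLoS state is independent across $j$, so the convolution of mixture densities is valid and no cross-terms arise because we are convolving densities (not CDFs). Once these points are addressed, the result follows by direct substitution.
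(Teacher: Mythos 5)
Your proposal is correct and follows essentially the same route as the paper's proof: rewrite the throughput threshold as $\zeta_i > A_i + B_i\sum_{j\in\mathcal{W}}g_j\zeta_j$, apply the LoS/NLoS mixture form of the tail of $\zeta_i$, and obtain $f_\chi$ as a shifted convolution of the scaled interferers' mixture densities, then average over $\chi$. Your added remarks on the Jacobian bookkeeping and the support of $\chi$ are sound and, if anything, slightly more careful than the paper's own exposition.
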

\begin{proof}
See Appendix~\ref{proof1}.
\end{proof}

Based on Proposition~\ref{proposition1}, we can further calculate the expected throughput for U2N transmissions as follow.

\begin{proposition}\label{proposition2}
The expected throughput of the $i$-th UAV in the U2N mode~$( i \in \mathcal {M} )$ over the $k$-th subchannel is given by
\begin{equation}
ER_{i,k} = \int_{R_{th}}^\infty \mathcal{P}^{st}_{i,k} \left( r \right) dr  +  R_{th} \cdot \mathcal{P}^{st}_{i,k} \left( R_{th} \right).
\end{equation}
\end{proposition}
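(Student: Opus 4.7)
The plan is to identify $ER_{i,k}$ as an expectation of a truncated random variable and to apply the standard tail-sum (layer-cake) identity for nonnegative random variables, viewing $\mathcal{P}^{st}_{i,k}(r)$ from Proposition~\ref{proposition1} as the complementary CDF of the instantaneous throughput $R_{i,k}=\log_2(1+\gamma_{i,k})$.

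First I would set up the effective throughput. Since a transmission is declared successful only when $R_{i,k}\geq R_{th}$, the received rate in a frame is $Y\triangleq R_{i,k}\,\mathbb{1}\{R_{i,k}\geq R_{th}\}$, which is nonnegative, so $ER_{i,k}=\mathbb{E}[Y]$. By Proposition~\ref{proposition1}, for any threshold $r$ we have $\mathcal{P}^{st}_{i,k}(r)=\Pr\{R_{i,k}\geq r\}$, i.e.\ the survival function of $R_{i,k}$ evaluated at $r$.

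Next I would apply the tail-sum formula $\mathbb{E}[Y]=\int_0^\infty \Pr\{Y>t\}\,dt$ and split the integral at $t=R_{th}$. For $t<R_{th}$, the event $\{Y>t\}$ coincides with $\{R_{i,k}\geq R_{th}\}$, because whenever the indicator is active, $Y\geq R_{th}>t$; hence $\Pr\{Y>t\}=\mathcal{P}^{st}_{i,k}(R_{th})$ on this range, contributing $R_{th}\cdot\mathcal{P}^{st}_{i,k}(R_{th})$. For $t\geq R_{th}$, the event $\{Y>t\}$ coincides with $\{R_{i,k}>t\}$, giving $\Pr\{Y>t\}=\mathcal{P}^{st}_{i,k}(t)$ and contributing $\int_{R_{th}}^{\infty}\mathcal{P}^{st}_{i,k}(r)\,dr$. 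Summing the two pieces yields exactly the claimed expression.

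There is really no technical obstacle here; the only thing worth being careful about is the boundary case at $t=R_{th}$, which has measure zero and whether $\mathcal{P}^{st}_{i,k}$ is defined with a strict or non-strict inequality. Since the distribution of $\gamma_{i,k}$ (and hence of $R_{i,k}$) is continuous under the Rice/Rayleigh fading assumptions invoked before Proposition~\ref{proposition1}, $\Pr\{R_{i,k}=R_{th}\}=0$, so the two conventions agree and no correction term is needed. I would end the proof by collecting the two integrals to obtain the stated formula.
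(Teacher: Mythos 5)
Your proof is correct, and it arrives at the same decomposition as the paper, but by a slightly different and arguably cleaner route. The paper writes $ER_{i,k}=\int_{R_{th}}^{\infty} r\,f_R(r)\,dr$ (implicitly the expectation of the throughput truncated below the threshold, exactly your $Y=R_{i,k}\,\mathbb{1}\{R_{i,k}\geq R_{th}\}$) and then integrates by parts, manipulating the boundary term via $\left. r\,F_R(r)\right|_0^{\infty}=\left. r\right|_0^{\infty}=\int_0^{R_{th}}1\,dr+\int_{R_{th}}^{\infty}1\,dr$ before identifying $1-F_R(r)$ with $\mathcal{P}^{st}_{i,k}(r)$. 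You instead apply the tail-sum (layer-cake) identity to $Y$ directly and split the $t$-integral at $R_{th}$. The two computations are two faces of the same integration by parts, but your version makes the truncation explicit from the start, does not presuppose that $R_{i,k}$ admits a density, and avoids the paper's formally divergent boundary-term cancellation, which requires pairing the infinite terms carefully to be rigorous. Your remark that the continuity of the fading distributions makes the strict-versus-nonstrict inequality at $t=R_{th}$ immaterial is a point the paper glosses over and is worth keeping.
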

\begin{proof}
See Appendix~\ref{proof2}.
\end{proof}

\subsubsection{U2D Mode}
With the underlay property, a UAV in the U2D mode may suffer from the interference from co-channel U2N and U2D links. More explicitly, for the $j$-th UAV in the U2D mode~$( j \in \mathcal {N} )$ over the $k$-th subchannel, the received SINR at its corresponding mobile device can be expressed as

\begin{equation}\label{U2D_SNR}
\gamma_{j,k} = \frac{ \phi_{j,k} P^u g_j \zeta_j}{N_0 + \sum\limits_{i\in \mathcal {M}} \phi_{i,k}  P^u g_i \zeta_i + \sum\limits_{j'\in \mathcal {N}, j' \neq j} \phi_{j',k}  P^u g_{j'} \zeta_{j'}  }.
\end{equation}
Here, $\phi_{j,k}$, $\phi_{i,k}$, and $\phi_{j',k}$ denote the subchannel allocation indicators. Besides, $g_j$, $g_i$, and $g_{j'}$ are channel gains. Moreover, $\zeta_j$, $\zeta_i$, and $\zeta_{j'}$ are the small-scale fading coefficients.

Therefore, we can express the throughput of the $j$-th UAV in the U2D mode~$( j \in \mathcal {N} )$ over the $k$-th subchannel as $R_{j,k} = \log_2\left(1+\gamma_{j,k}\right)$. Likewise, given the threshold $R_{th}$, we can calculate the STP and the expected throughput for U2D transmissions by following two propositions.

\begin{proposition}\label{proposition3}
The STP of the $j$-th UAV in the U2D mode ~$( j \in \mathcal {N} )$ over the $k$-th subchannel is given by
\begin{equation}
 \mathcal{P}^{st}_{j,k}(R_{th})  =  \int_0^\infty \left\{  \mathcal{P}^{LoS}_j \cdot [1-F_{ri}(x)]+ \mathcal{P}^{NLoS}_j \cdot [1-F_{ra}(x)] \right\}  f_\chi(x) dx,
\end{equation}
where $F_{ri}(x) = 1 - Q_1(\sqrt{2K_{ri}},x \sqrt{2(K_{ri}+1)})$ and $F_{ra}(x) = 1 - e^{-x^2/2}$.
Besides, $f_\chi(x) = g(x-A_j)$, with $ g(x) = g_{\mathcal {W}[1]}(x) \ast \ldots \ast g_{\mathcal {W} [N_w]}(x)$, $A_j = \frac{N_0\kappa }{P^u g_j }$, and $\kappa = 2^{R_{th}}-1$. Here, $\mathcal {W} = \left\{ i \in \mathcal {M} | \phi_{i,k}=1  \right\} \bigcup \left\{ j' \in \mathcal {N}\setminus j | \phi_{j',k}=1  \right\}$ whose size is $N_w$, $\mathcal {W}[w]$ is the $w$-th element in $\mathcal {W}$,  $g_i(x)= ({1}/{B_j})  f_{\chi_i}({x}/{B_j})$, $B_j = \frac{ \kappa }{g_j}$, $f_{\chi_i}(y)= \mathcal{P}^{LoS}_i \cdot [({1}/{g_i}) f_{ri}(y/{g_i})] + \mathcal{P}^{NLoS}_i \cdot [({1}/{g_i}) f_{ra}(y/{g_i})]$, $f_{ri}(y)= 2\left(K_{ri}+1\right)y e^{ -\left(K_{ri}+1\right)y^2-K_{ri} } \cdot I_0\left(2\sqrt{(K_{ri}+1)K_{ri}} y\right)$, and $f_{ra}(y)=y e^{-y^2/2}$.
\end{proposition}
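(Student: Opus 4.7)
The proof should parallel that of Proposition~\ref{proposition1}, since the only structural difference between the two settings is the composition of the interferer set: in U2N mode only co-channel U2D links contribute to interference, whereas in U2D mode, by the underlay sharing rule of Section~\ref{System Model}, both co-channel U2N links indexed by $\mathcal{M}$ and the other co-channel U2D links indexed by $\mathcal{N}\setminus j$ contribute. Accordingly, I plan to follow the four-step recipe of Proposition~\ref{proposition1}, adjusted for the enlarged $\mathcal{W}$.

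First, I would convert the successful transmission event into an SINR condition: $R_{j,k}\ge R_{th}$ is equivalent to $\gamma_{j,k}\ge\kappa$ with $\kappa=2^{R_{th}}-1$. Substituting the expression for $\gamma_{j,k}$ from~(\ref{U2D_SNR}) and isolating $\zeta_j$ on the left gives the equivalent event $\zeta_j\ge \chi$, where $\chi=A_j+B_j\sum_{w\in\mathcal{W}}g_w\zeta_w$, $A_j=N_0\kappa/(P^u g_j)$, $B_j=\kappa/g_j$, and $\mathcal{W}=\{i\in\mathcal{M}\,|\,\phi_{i,k}=1\}\cup\{j'\in\mathcal{N}\setminus j\,|\,\phi_{j',k}=1\}$. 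Next, conditioning on the independent LoS/NLoS state of link $j$, whose small-scale fading is Rician (CDF $F_{ri}$) or Rayleigh (CDF $F_{ra}$) with mixture weights $\mathcal{P}^{LoS}_j,\mathcal{P}^{NLoS}_j$, and integrating against the pdf of $\chi$ via the law of total probability yields the stated integral representation of $\mathcal{P}^{st}_{j,k}(R_{th})$.

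The remaining task is to identify $f_\chi$. Because small-scale fadings across distinct links are mutually independent, each summand $B_j g_w\zeta_w$ has pdf $g_w(x)=(1/B_j)f_{\chi_w}(x/B_j)$, obtained by the affine scaling of the LoS/NLoS mixture pdf $f_{\chi_w}$ of the normalized channel power of link $w$. Independence across $w$ then allows the pdf of $\sum_{w\in\mathcal{W}}B_j g_w\zeta_w$ to be written as the $N_w$-fold convolution $g=g_{\mathcal{W}[1]}\ast\cdots\ast g_{\mathcal{W}[N_w]}$, and the deterministic shift by $A_j$ delivers $f_\chi(x)=g(x-A_j)$, matching the statement. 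I expect no genuinely new analytical obstacle; the main piece of work is combinatorial bookkeeping in enlarging $\mathcal{W}$ to absorb the U2N contributors and then verifying that their fadings remain jointly independent of $\zeta_j$, which is what licenses the reuse of the conditioning-and-convolution argument from Appendix~\ref{proof1} verbatim.
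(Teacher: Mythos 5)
Your proposal is correct and follows exactly the route the paper intends: the authors explicitly omit this proof because it is the proof of Proposition~\ref{proposition1} (Appendix~\ref{proof1}) repeated with the interferer set $\mathcal{W}$ enlarged to include co-channel U2N links and the other co-channel U2D links, which is precisely your threshold-to-SINR conversion, LoS/NLoS conditioning, and scaled-mixture convolution argument. No discrepancy to report.
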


\begin{proposition}\label{proposition4}
The expected throughput of the $j$-th UAV in the U2D mode ~$( j \in \mathcal {N} )$ over the $k$-th subchannel is given by
\begin{equation}
ER_{j,k}  = \int_{R_{th}}^\infty \mathcal{P}^{st}_{j,k} \left( r \right) dr  +  R_{th} \cdot \mathcal{P}^{st}_{j,k} \left( R_{th} \right).
\end{equation}
\end{proposition}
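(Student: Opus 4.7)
The plan is to mirror the argument used for Proposition~\ref{proposition2}, observing that the two statements differ only in which SINR expression underlies the throughput $R_{j,k}$; the structural identity between $ER_{i,k}$ and $ER_{j,k}$ is driven entirely by a standard truncated-expectation / tail-integral manipulation, which is agnostic to the specific form of $f_\chi$ given in Proposition~\ref{proposition3}.

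First I would fix the interpretation of $ER_{j,k}$. From the convention stated just before Proposition~\ref{proposition1}, a transmission only contributes to the effective throughput when $R_{j,k}\geq R_{th}$; otherwise it is declared unsuccessful and contributes zero. Thus
\begin{equation*}
ER_{j,k} \;=\; \mathbb{E}\!\left[R_{j,k}\,\mathbf{1}\{R_{j,k}\geq R_{th}\}\right].
\end{equation*}
Second, I would decompose $R_{j,k}\,\mathbf{1}\{R_{j,k}\geq R_{th}\} = (R_{j,k}-R_{th})\,\mathbf{1}\{R_{j,k}\geq R_{th}\} + R_{th}\,\mathbf{1}\{R_{j,k}\geq R_{th}\}$ and take expectations term by term. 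The second term is exactly $R_{th}\cdot\mathcal{P}^{st}_{j,k}(R_{th})$ because, by the definition of the STP used in Proposition~\ref{proposition3}, $\mathcal{P}^{st}_{j,k}(r)=\Pr(R_{j,k}\geq r)$.

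Third, for the first term I would apply the layer-cake identity to the nonnegative random variable $Y=(R_{j,k}-R_{th})^{+}$:
\begin{equation*}
\mathbb{E}[Y] \;=\; \int_{0}^{\infty}\Pr(Y>y)\,dy \;=\; \int_{0}^{\infty}\Pr(R_{j,k}>R_{th}+y)\,dy,
\end{equation*}
and then change variables $r=R_{th}+y$ to obtain $\int_{R_{th}}^{\infty}\Pr(R_{j,k}\geq r)\,dr=\int_{R_{th}}^{\infty}\mathcal{P}^{st}_{j,k}(r)\,dr$. Summing the two contributions yields the claimed formula.

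No step is genuinely hard, and the main subtlety (rather than obstacle) is simply justifying the convention that ``expected throughput'' means the expectation of $R_{j,k}$ restricted to the success event $\{R_{j,k}\geq R_{th}\}$, rather than over all channel realizations; once this is established, the derivation is a direct tail-integral calculation. The only formal difference from Proposition~\ref{proposition2} is the underlying distribution of $R_{j,k}$ implicit in $\mathcal{P}^{st}_{j,k}(\cdot)$, but since the argument uses only the identity $\mathcal{P}^{st}_{j,k}(r)=\Pr(R_{j,k}\geq r)$ supplied by Proposition~\ref{proposition3}, the proof goes through verbatim in the U2D setting.
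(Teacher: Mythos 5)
Your proof is correct and follows essentially the same route as the paper, which omits the proof of Proposition~\ref{proposition4} and refers back to the proof of Proposition~\ref{proposition2}: both reduce the truncated expectation $\mathbb{E}\left[R_{j,k}\,\mathbf{1}\{R_{j,k}\geq R_{th}\}\right]$ to the tail integral $\int_{R_{th}}^\infty \mathcal{P}^{st}_{j,k}(r)\,dr$ plus the boundary term $R_{th}\cdot\mathcal{P}^{st}_{j,k}(R_{th})$. If anything, your layer-cake formulation applied to $(R_{j,k}-R_{th})^{+}$ is slightly cleaner than the paper's integration by parts, which informally manipulates the divergent boundary term $\left. r\,F_R(r)\right|_0^\infty$ before the infinities cancel.
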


The proofs for Propositions~\ref{proposition3} and~\ref{proposition4} are omitted, as the STP and the expected throughput in the U2D mode can be derived similarly to those in the U2N mode.

\subsection{AoI of UAV}
\label{Age of Information}

In this paper, we adopt the AoI to formulate the freshness of the sensory data, i.e., how timely the transmission of valid sensory data is~\cite{SRM-2012}. To be specific, in the $n$-th frame, the AoI of the $i$-th UAV~$( i \in \mathcal {M} \cup \mathcal {N} )$ is defined as
\begin{equation}\label{AoI}
 \tau_i^{(n)} = n - u_i(n),
\end{equation}
where $u_i(n)$ denote the \emph{latest} frame that the valid data transmission of the $i$-th UAV is finished. As is shown in Fig.~\ref{AoI_function}, when a UAV successfully senses its target and completely transmits the valid sensory data, the AoI will reduce to zero; otherwise, the AoI will increase with time. Equation~(\ref{AoI}) implies that the UAV with a small AoI keeps its sensory data fresher than the one with a large AoI. Therefore, to ensure the freshness of the sensory data, each UAV aims to perform sensing and transmission as quickly as possible to reduce its AoI.

\begin{figure}[!t]
\centering
\includegraphics[width=6.4in]{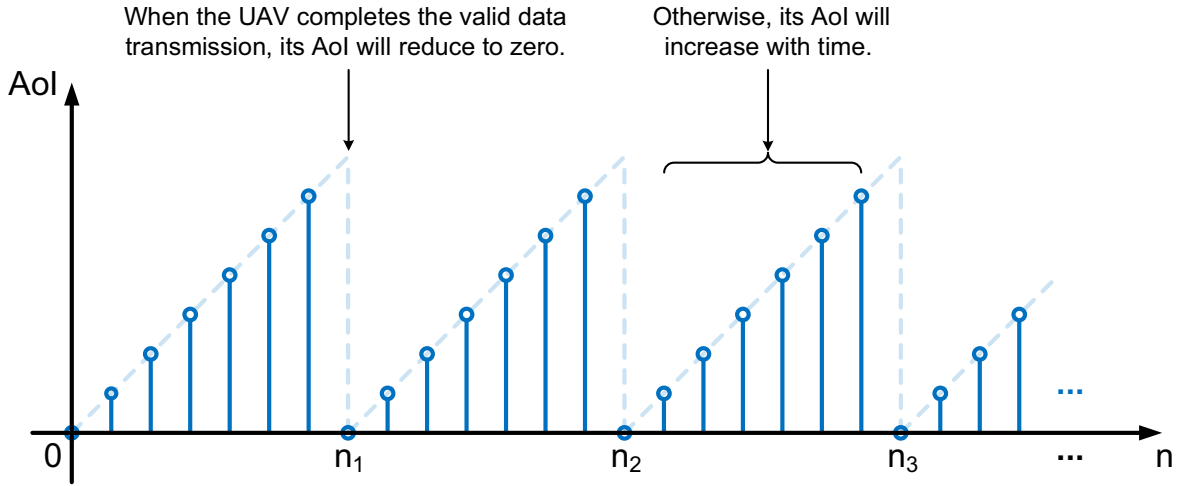}
\vspace{-12mm}
\caption{AoI as a function of the frame index $n$, in which $n_1$, $n_2$, and $n_3$ indicate the frames that the UAV completes the valid data transmission.}
\vspace{-7mm}
\label{AoI_function}
\end{figure}

\section{Joint Sensing and Transmission Protocol}
\label{Joint Sensing and Transmission Protocol}

In this section, we design a joint sensing and transmission protocol to schedule multiple UAVs performing sensing tasks. Following the overview of our proposed protocol, the subchannel allocation mechanism is elaborated on.

\subsection{Protocol Overview}
In our protocol, we assume that UAVs perform their sensing tasks in a sequence of \emph{cycles}, whose time unit is \emph{frame} with the duration $t_f$. In Fig.~\ref{protocol}, we illustrate one cycle for UAVs in the U2N mode, which also applies to those in the U2D mode. As is shown in Fig.~\ref{protocol}, each cycle consists of two stages, i.e., the sensing stage and the transmission stage. In the sensing stage, a UAV moves to a \emph{sensing location} and senses its target, while in the transmission stage, the UAV flies to a \emph{transmission location} and transmits its sensory data during the flight\footnote{When a UAV's sensing location is the same as its transmission location, the flying time in both of the stages can be zero.}. We define a \emph{stage indicator}, denoted by $\mathcal {I}_i^{(n)}$, to indicate whether the $i$-th UAV~$( i \in \mathcal {M} \cup \mathcal {N} )$ is sensing or transmitting in the $n$-th frame, whose value is $0$ or $1$ if the UAV is in the sensing stage or the transmission stage, respectively.

In each cycle, a UAV should design its trajectory by determining  its sensing and transmission locations. Since the trajectory of a UAV may be influenced by others, each UAV ought to take all UAVs' states\footnote{More explicitly, the state of a UAV contains the indexes of current frame and cycle, its current location, sensing location, transmission location, remained data size for transmission, AoI, and stage indicator. More details on the states of UAVs will be introduced in the Section~\ref{AoI Minimization Problem Formulation}.}  into consideration when it decides its sensing and transmission locations. To this end, we assume that a UAV is required to report its state to the BS before each frame, and thus the BS can have the full knowledge of all UAVs' states in the system. At the beginning of a new cycle, a UAV first sends a beacon to the BS, after which the BS broadcasts all UAVs' states to each UAV. As necessary information is provided, a UAV can then make a decision on its sensing and transmission locations in this cycle.

In what follows, we will specify the sensing stage and the transmission stage sequentially.

\begin{figure}[!t]
\centering
\includegraphics[width=6.4in]{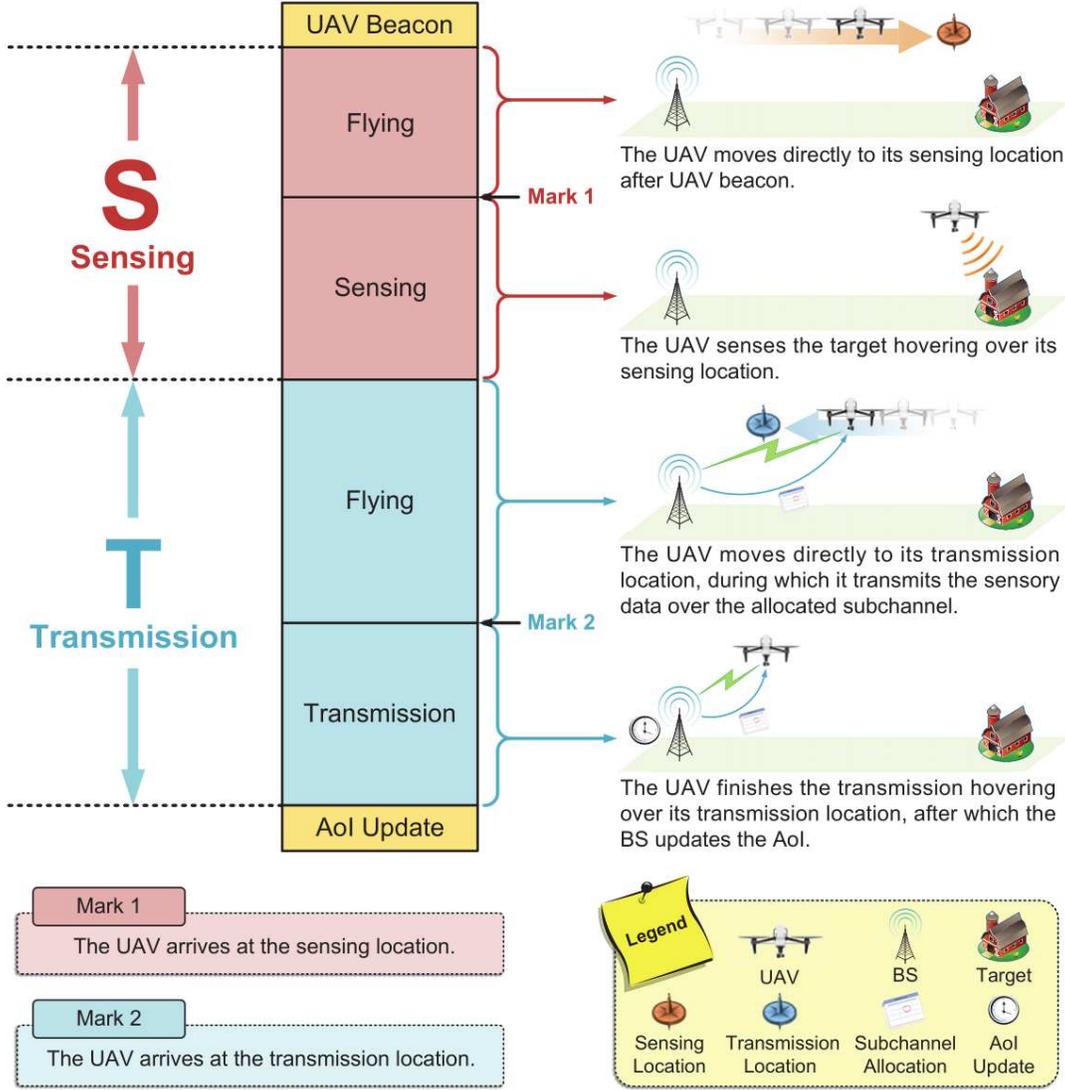}
\vspace{-9mm}
\caption{Joint sensing and transmission protocol.}
\vspace{-7mm}
\label{protocol}
\end{figure}

\subsubsection{Sensing Stage}
When a UAV has already decided its sensing location, it will fly directly towards there with the maximum flying speed $v_{max}$. Denote the location of the $i$-th UAV before the $n$-th frame as $\textbf{\emph{x}}_i ^{(n)}$, and the sensing location before this frame as $\textbf{\emph{x}}_i^{S(n)}$. We can then depict the trajectory of the UAV in the sensing stage within the $n$-th frame as
\begin{equation}\label{sensing_stage_trajectory}
\Delta\textbf{\emph{x}}_i^{S(n)} = \textbf{\emph{x}}_i ^{(n+1)} - \textbf{\emph{x}}_i ^{(n)} = \left\{ {\begin{array}{*{20}{cc}}
\frac{ \textbf{\emph{x}}_i^{S(n)} - \textbf{\emph{x}}_i ^{(n)} } { || \textbf{\emph{x}}_i^{S(n)} - \textbf{\emph{x}}_i ^{(n)} ||_2}\cdot v_{max} t_f, &\quad \textmd{if}~|| \textbf{\emph{x}}_i^{S(n)} - \textbf{\emph{x}}_i ^{(n)} ||_2 > v_{max} t_f,\\
\textbf{\emph{x}}_i^{S(n)} - \textbf{\emph{x}}_i ^{(n)},& \quad \textmd{otherwise}.
\end{array}} \right.
\end{equation}
As the UAV arrives at its sensing location, it will sense its target for one frame hovering over the sensing location. We assume that each UAV is required to collect valid sensory data with the size of at least $D^V$ in one cycle. As such, to guarantee the validness of the sensory data, the expected data size that the $i$-th UAV should collect within this cycle is given by $D^S_i = D^V/\mathcal {P}^{ss}_i$, where $\mathcal {P}^{ss}_i$ is the SSP of the $i$-th UAV. Once the sensing process is completed within the $n$-th frame, we set the remained data size before the $(n+1)$-th frame for the $i$-th UAV as $D_i^{(n+1)} = D^S_i $.

\subsubsection{Transmission Stage}
After the sensing stage, a UAV will attempt to transmit its sensory data back to the BS or the mobile device if the BS assigns a subchannel to it. However, the UAV without allocated subchannel cannot transmit its sensory data. We assume that a UAV can leave its sensing location and move directly to its transmission location for better channel condition, with the maximum flying speed $v_{max}$. Likewise, denote the current location and the transmission location of the $i$-th UAV before the $n$-th frame as $\textbf{\emph{x}}_i ^{(n)}$ and $\textbf{\emph{x}}_i^{T(n)}$, respectively. We can then describe the trajectory of the $i$-th UAV in the transmission stage within the $n$-th frame by
\begin{equation}\label{transmission_stage_trajectory}
\Delta\textbf{\emph{x}}_i^{T(n)} = \textbf{\emph{x}}_i ^{(n+1)} - \textbf{\emph{x}}_i ^{(n)} = \left\{ {\begin{array}{*{20}{cc}}
\frac{ \textbf{\emph{x}}_i^{T(n)} - \textbf{\emph{x}}_i ^{(n)} } { || \textbf{\emph{x}}_i^{T(n)} - \textbf{\emph{x}}_i ^{(n)} ||_2}\cdot v_{max} t_f, &\quad \textmd{if}~|| \textbf{\emph{x}}_i^{T(n)} - \textbf{\emph{x}}_i ^{(n)} ||_2 > v_{max} t_f,\\
\textbf{\emph{x}}_i^{T(n)} - \textbf{\emph{x}}_i ^{(n)},& \quad \textmd{otherwise}.
\end{array}} \right.
\end{equation}
To further improve the efficiency, we also assume that a UAV can transmit its sensory data during the flight. When the UAV arrives at its transmission location, it will continue transmitting its sensory data hovering over there until all of the sensory data is transmitted. The expected data size that the $i$-th UAV over the $k$-th subchannel can transmit within the $n$-th frame is given by $D^{T(n)}_{i,k} = t_f \cdot ER_{i,k}^{(n)}$, where $ER_{i,k}^{(n)}$ is the expected throughput in this frame.  As such, we can calculate the remained data size for the $i$-th UAV before the $(n+1)$-th frame as $D_i^{(n+1)} = \max\{ 0, D_i^{(n)}-D^{T(n)}_{i,k}\}$.

Finally, as a UAV completely transmits all of its sensory data and finishes the transmission stage, the BS or the mobile device evaluates the validness of the received data, and then updates the AoI for the UAV, which ends the current cycle.

\subsection{Subchannel Allocation Mechanism}

At each frame, the BS performs subchannel allocation for all UAVs in the transmission stage. For the sake of timeliness, we assume that the BS schedules subchannels to maximize the sum of expected data size transmitted in the system within this frame. To be specific, in the $n$-th frame, the sum of expected data size is given by
\begin{equation}
D^{(n)} = \sum^{M+N}_{i=1}\sum^{K}_{k=1} D^{T(n)}_{i,k}
\end{equation}
Therefore, the subchannel allocation matrix in the $n$-th frame can be obtained by $\Phi^{(n)} = \mathop {\max \arg} \limits_{ \Phi } D^{(n)} $.

An available subchannel allocation matrix should satisfy following constraints:
\begin{itemize}
  \item Based on our proposed protocol, the BS does not allocate subchannels to UAVs in the sensing stage, i.e., for all $i$ satisfying $\mathcal {I}_i^{(n)} = 0$, we have $\phi_{i,k} = 0$, $\forall k \in \mathcal {K}$;
  \item According to Section~\ref{System Model}, each subchannel can be utilized for at most one U2N link and $V_s$ U2D links, and thus we have $\sum^M_{i=1} \phi_{i,k} \leq 1$ and $\sum^N_{i=1} \phi_{i,k} \leq V_s$, $\forall k \in \mathcal {K}$. In addition, each UAV can only access one subchannel  at a time, i.e., $\sum^K_{k=1} \phi_{i,k} \leq 1$, $\forall i \in \mathcal {M} \cup \mathcal {N}$.
\end{itemize}

Finding the optimal subchannel allocation matrix under above constraints can be regarded as a classic mixed-integer non-linear programming~(MINLP) problem~\cite{D-1976}, which can be reformulated as a many-to-one two-sided matching problem~\cite{ECABA-2011,HTLZ-2014}, and then be efficiently solved by utilizing the swap matching algorithm in~\cite{HYL-2017}. The stability and the optimality of the obtained subchannel allocation matrix can be guaranteed. It is also worthwhile to mention that the subchannel allocation matrix in a frame is only determined by the locations of all UAVs in this frame. Since the subchannels are limited, each UAV has the incentive to compete with other UAVs by designing its trajectory.

\section{Age of Information Minimization Problem Formulation}
\label{AoI Minimization Problem Formulation}

In this section, we investigate on the AoI minimization problem for UAVs in the cellular Internet of UAVs, which can be regarded as a MDP. Since the interference among UAVs is not observable, traditional model-based methods are infeasible to tackle with this MDP. Therefore, we reformulate this problem using multi-agent RL to make it solvable.

\subsection{AoI Minimization Problem}

In this paper, we aim to minimize the average AoI in the system by designing the trajectories of all UAVs in the future $N_f$ frames. Since a UAV's trajectory can be determined by its sensing and transmission locations as it performs cycles, the AoI minimization problem can be formulated as optimizing UAVs' sensing and transmission locations at each frame, i.e.,
\begin{equation}\label{AoI problem}
\begin{split}
\textbf{P1:}~~\mathop {\min } \limits_{ \textbf{\emph{x}}_i^{S(n)}, \textbf{\emph{x}}_i^{T(n)}  } ~~& \frac{1}{N_f}\sum^{N_f}_{ n = 1 }\tau_i ^{(n)} \\
s.t. ~~& \textbf{\emph{x}}_i^{S(n)}, \textbf{\emph{x}}_i^{T(n)} \in \mathcal {\mathcal {X}}, \\
&   \mathcal {P}^{ss}_i \left( \textbf{\emph{x}}_i^{S(n)}  \right) > 0.
\end{split}
\end{equation}
Here, $\mathcal {\mathcal {X}}$ denotes the set of all possible locations in the space, and $n \in [1,N_f]$.

As the AoI of each UAV in the future frames is influenced by the trajectories of all UAVs, it is challenging to solve problem~(\ref{AoI problem}). Fortunately, in this problem, UAVs' states at a frame are only determined by their states and decisions in the last frame, and thus this problem can be regarded as an MDP~\cite{RA-1998}. Therefore, we can adopt multi-agent RL to solve this problem efficiently. In what follows, we will first analyze problem~(\ref{AoI problem}) under the MDP framework. After that, we will reformulate this problem by adopting multi-agent RL.

\subsection{Multi-agent RL Formulation}

In our system, each UAV is regarded as an \emph{agent}, and all of the network setting~(including the BS, sensing target, and mobile devices) is regarded as the \emph{environment}. According to~\cite{KMMA-2017}, we can characterize all UAVs by a tuple\footnote{The standard form of the tuple also includes a \emph{discount factor} which evaluates the timeliness of the reward in the future. In our problem, we assume that the discount factor equals to one. Therefore, we omit it in the tuple for simplicity.} $ < \mathcal {S}, \{\mathcal {A}_i\}_{i \in\mathcal {M} \cup \mathcal {N}}, \mathcal {P}, \mathcal \{\mathcal {R}_i\}_{i \in\mathcal {M} \cup \mathcal {N}}>$, in which
\begin{itemize}
  \item $\mathcal {S}$ is the \emph{state space} including all possible states of UAVs in the system at each frame;
  \item $\mathcal {A}_i ~( i \in\mathcal {M} \cup \mathcal {N} )$ is the \emph{action space} of the $i$-th UAV, which consists of all available actions of the UAV at the each frame;
  \item $\mathcal {P}$ is the \emph{state transition function}, which maps the state spaces and the action spaces of all UAVs in the current frame to their state spaces in the next frame;
  \item $\mathcal {R}_i~ ( i \in\mathcal {M} \cup \mathcal {N} )$ is the \emph{reward function} of the $i$-th UAV, which maps the state spaces and the action spaces of the UAV in the current frame to its expected reward;
\end{itemize}

In the following, we will elaborate on the above elements sequentially.

\subsubsection{State Space}
We define the state of UAVs in the system before the $n$-th frame as $\textbf{\emph{s}}^{(n)} = \{ \textbf{\emph{s}}_i^{(n)} \}_{i \in\mathcal {M} \cup \mathcal {N}}$, where $\textbf{\emph{s}}_i ^{(n)}= \left( n, c_i^{(n)}, \textbf{\emph{x}}_i^{(n)} , \textbf{\emph{x}}_i^{S(n)} , \textbf{\emph{x}}_i^{T(n)}, D_i ^{(n)},  \tau_i^{(n)}, \mathcal {I}_i^{(n)} \right) $ indicating the state of the $i$-th UAV. Here, $n$ is the frame index, $c_i^{(n)}$ is the cycle index, $\textbf{\emph{x}}_i^{(n)}$ is the current location, $\textbf{\emph{x}}_i^{S(n)}$ is the sensing location, $\textbf{\emph{x}}_i^{T(n)}$ is the transmission location, $D_i^{(n)}$ is the remained data size, $\tau_i^{(n)}$ is the AoI, and $\mathcal {I}_i^{(n)}$ is the stage indicator.

\subsubsection{Action Space}
We define the actions of a UAV as its decisions on the sensing and the transmission locations. To be specific, the action of the $i$-th UAV within the $n$-th frame is expressed as $\textbf{\emph{a}}_i^{(n)} = \left( \textbf{\emph{a}}_i^{S(n)}, \textbf{\emph{a}}_i ^{T(n)}\right)$, in which $\textbf{\emph{a}}_i^{S(n)}$ and $\textbf{\emph{a}}_i^{T(n)}$ denote the sensing and the transmission locations in this frame.

According to our proposed protocol, a UAV makes new decisions on its sensing and the transmission locations only at the beginning of cycles. Therefore, we can conclude that when the $i$-th UAV finishes the data transmission in the~$n$-th frame, it will make a new decision in the $(n+1)$-th frame; otherwise, it will keep its decision unchanged in the $(n+1)$-th frame. As such, we can express the available action set for the $i$-th UAV within the $(n+1)$-th frame as
\begin{equation}
\mathcal {A}_i\left(\emph{\textbf{s}}^{(n+1)}\right) = \left\{ {\begin{array}{*{20}{cc}}
\left\{  (\textbf{\emph{a}}^S_i,\textbf{\emph{a}}^T_i) \in \mathcal {X}^2 | \mathcal {P}^{ss}_i (\textbf{\emph{a}}^S_i ) > 0 \right\}, & \quad \textmd{if}~D_i^{(n+1)} = 0~\textmd{and}~\mathcal {I}_i^{(n)} = 1 ,\\
\textbf{\emph{a}}_i^{(n)}, & \quad \textmd{otherwise}.
\end{array}} \right.
\end{equation}

\subsubsection{State Transition Function}
We define the state transition from the $n$-th frame to the~$(n+1)$-th frame for the $i$-th UAV as follow.

\begin{itemize}
  \item The cycle index before the~$(n+1)$-th frame increases by one only when the UAV finishes the transmission of sensory data within the~$n$-th frame, i.e.,
\end{itemize}
\begin{equation}
c_i^{(n+1)} = \left\{ {\begin{array}{*{20}{cc}}
c_i^{(n)}+1, & \quad \textmd{if}~D_i^{(n+1)} = 0~\textmd{and}~\mathcal {I}_i^{(n)} = 1 ,\\
c_i^{(n)} , & \quad \textmd{otherwise}.
\end{array}} \right.
\end{equation}

\begin{itemize}
  \item The UAV's current location before the~$(n+1)$-th frame is determined by the UAV's location and stage before the~$n$-th frame. Based on the current action $\textbf{\emph{a}}_i^{(n)}$, we can obtain the movement of the UAV within the sensing and the transmission stages, i.e., $\Delta\emph{\textbf{x}}^{S(n)}_i$ and $\Delta\emph{\textbf{x}}_i^{T(n)}$, from~(\ref{sensing_stage_trajectory}) and~(\ref{transmission_stage_trajectory}), respectively. Then, we have
\end{itemize}
\begin{equation}
\emph{\textbf{x}}_i^{(n+1)} = \left\{ {\begin{array}{*{20}{cc}}
\emph{\textbf{x}}_i^{(n)} + \Delta\emph{\textbf{x}}^{S(n)}_i, & \quad \textmd{if}~\mathcal {I}_i^{(n)} = 0,\\
\emph{\textbf{x}}_i^{(n)} + \Delta\emph{\textbf{x}}^{T(n)}_i, & \quad \textmd{otherwise}.
\end{array}} \right.
\end{equation}

\begin{itemize}
  \item The UAV's sensing and transmission locations before the $(n+1)$-th frame can be obtained from the UAV's action in the $n$-th frame, i.e.,
\end{itemize}
\begin{equation}
\emph{\textbf{x}}^{S(n+1)}_i = \emph{\textbf{a}}^{S(n)}_i, \quad \emph{\textbf{x}}^{T(n+1)}_i = \emph{\textbf{a}}^{T(n)}_i
\end{equation}

\begin{itemize}
  \item The UAV's remained data size before the $(n+1)$-th frame is determined by the UAV's stage and location before the $n$-th frame, i.e.,
\end{itemize}
\begin{equation}
D_i^{(n+1)} = \left\{ {\begin{array}{*{20}{cc}}
\max\{ 0, D_i^{(n)}-D_{i,k}^{T(n)} \}, & \quad \textmd{if}~\mathcal {I}_i^{(n)} = 1,\\
D^S_i, & \quad \textmd{if}~\emph{\textbf{x}}_i^{(n)} = \emph{\textbf{x}}^{S(n)}_i~\textmd{and}~\mathcal {I}_i^{(n)} = 0,\\
0, & \quad \textmd{otherwise}.
\end{array}} \right.
\end{equation}

\begin{itemize}
  \item The UAV's AoI is updated and reduced to zero in the $(n+1)$-th frame only when the UAV finishes the data transmission within the $n$-th frame; otherwise, the AoI increases with time. Thus, we have
\end{itemize}
\begin{equation}
\tau_i^{(n+1)} = \left\{ {\begin{array}{*{20}{cc}}
0, & \quad \textmd{if}~D_i^{(n+1)} = 0~\textmd{and}~\mathcal {I}_i^{(n)} = 1,\\
n +1 - u_i(n+1) , & \quad \textmd{otherwise}.
\end{array}} \right.
\end{equation}

\begin{itemize}
  \item The UAV's stage indictor will switch to the sensing stage before the $(n+1)$-th frame when it finishes the data transmission within the $n$-th frame. Besides, the stage indictor will switch to the transmission stage before the $(n+1)$-th frame if the UAV collects the sensory data within the $n$-th frame. Otherwise, the stage indictor will remain unchanged. Therefore, the stage indictor before the $(n+1)$-th frame can be given by
\end{itemize}
\begin{equation}
\mathcal {I}_i^{(n+1)} = \left\{ {\begin{array}{*{20}{cc}}
0, & \quad \textmd{if}~D_i^{(n+1)} = 0~\textmd{and}~\mathcal {I}_i^{(n)} = 1,\\
1, & \quad \textmd{if}~D_i^{(n+1)} = D^S_i~\textmd{and}~\mathcal {I}_i^{(n)} = 0,\\
\mathcal {I}_i^{(n)}, & \quad \textmd{otherwise}.
\end{array}} \right.
\end{equation}

\subsubsection{Reward Function}

We define the reward of the $i$-th UAV in the $n$-th frame as the minus AoI within this frame, i.e.,
\begin{equation}
r_i\left(\emph{\textbf{s}}^{(n)},\emph{\textbf{a}}_i^{(n)}\right) = - \tau_i^{(n)},
\end{equation}
Therefore, each UAV are motivated to minimize its AoI by making decisions on its sensing and transmission locations.

In our system, the \emph{policy} of a UAV is defined as a mapping from its state space to its action space, denoted by $\pi$. To be specific, the policy of the $i$-th UAV can be expressed by $\emph{\textbf{a}}_i= \pi_i(\emph{\textbf{s}})$, where $\emph{\textbf{s}}$ is the state of all UAVs in the system, and $\emph{\textbf{a}}_i$ is the action of the $i$-th UAV.
Before each frame, the $i$-th UAV first observes\footnote{Actually, a UAV cannot know the states of other UAVs by itself. However, before making decision in a cycle, the UAV can observe the states of other UAVs from the BS by sending a beacon. After making the decision, the UAV will keep its decision unchanged till the end of the current cycle. } the current state of all UAVs $\emph{\textbf{s}}$, and then takes an action $\emph{\textbf{a}}_i$ according to its policy $\pi_i$. After that, the UAV receives a reward $r_i$ and then observes the next state $\emph{\textbf{s}}'$, namely the state of all UAVs before the next frame. Therefore, the AoI minimization problem in~(\ref{AoI problem}) can be reformulated as maximizing the accumulated rewards of all UAVs in the system by optimizing their policies, i.e.,
\begin{equation}\label{RL problem}
\begin{split}
\textbf{P2:}~~\mathop {\max } \limits_{ \pi_i } ~~& \frac{1}{N_f}\sum^{N_f}_{ n = 1 } r_i \left(\emph{\textbf{s}}^{(n)},\emph{\textbf{a}}_i^{(n)} \right) \\
s.t. ~~& \pi_i \left( \textbf{\emph{s}}^{(n)} \right) \in \mathcal {A}_i\left(\emph{\textbf{s}}^{(n)}\right), \quad n \in [1,N_f]. \\
\end{split}
\end{equation}

\section{Algorithm Design by Multi-agent Deep Reinforcement Learning}
\label{Algorithm Design by Multi-agent DRL}
In our system, UAVs are incapable of obtaining enough information to specify their state transition functions. As such, a model-free RL algorithm, which does not require the prior information on state transition functions, is needed to solve this problem. Since UAVs' sensing and transmission locations are continuous-valued, the state-action space in problem~(P2) is infinite, which makes value-based algorithms infeasible\footnote{Specifically, value-based algorithms, e.g., the Q-learning algorithm~\cite{CP-1992}, can only be applied for the problems where agents' states and actions are discrete-valued. }. Therefore, we adopt a policy-based algorithm\footnote{It is worth mentioning that the policy gradient algorithm~\cite{RDSY-1999} is not adopted, since it may suffer high variance when the policies of multiple agents are optimized simultaneously.}, namely the Actor-Critic~(AC) algorithm~\cite{VJ-2000}, to cope with problem~(P2). In the AC algorithm, there exist the \emph{actor networks} for action selection and the \emph{critic networks} for action evaluation. To further accelerate the convergence, deep Q networks~(DQNs)~\cite{nature-2015,HG-2018} are utilized for value function approximation in actor and critic networks, which is also known as the DDPG algorithm~\cite{TJANTYDD-2016}. Once the actor and the critic networks of a UAV are well-trained, the policy of this UAV can be obtained.

In this section, we first propose a DDPG-based multi-UAV trajectory design algorithm to optimize multiple UAVs' policies. After that, we introduce the training process of the proposed algorithm. For simplicity, the frame index $n$ is omitted in the following notations.

\subsection{Algorithm Design}

We define the Q-value of the $i$-th UAV, denoted by $Q_i \left( \emph{\textbf{s}},\emph{\textbf{a}}_i \right)$, as the accumulated reward when it takes action $\emph{\textbf{a}}_i$ at state $\emph{\textbf{s}}$ and follows its policy $\pi_i$ afterwards. Specifically, we have
\begin{equation}
Q_i(\emph{\textbf{s}},\emph{\textbf{a}}_i) = r_i(\emph{\textbf{s}}, \emph{\textbf{a}}_i) + \mathbb{E}_{ \emph{\textbf{s}}' \sim \mathcal {P}\left(\emph{\textbf{s}},\emph{\textbf{a}}_i,\pi_{-i}(\emph{\textbf{s}})\right), \emph{\textbf{a}}_i' \sim \pi_i(\emph{\textbf{s}}') }
\left[ Q_i(\emph{\textbf{s}}', \emph{\textbf{a}}_i')\right],
\end{equation}
in which $\mathcal {P}\left(\emph{\textbf{s}},\emph{\textbf{a}}_i,\pi_{-i}(\emph{\textbf{s}})\right)$ denotes the state transition function, with $\pi_{-i}(\emph{\textbf{s}})$ implies the policies of the UAVs except the $i$-th UAV. At each state, the optimal policy of each $i$-th UAV is to select the action which can maximize its Q-value\cite{RA-1998}. Thus, we can describe the optimal policy of the $i$-th UAV at state $\emph{\textbf{s}}$ as
\begin{equation}\label{optimal_policy}
\pi^*_i(\emph{\textbf{s}}) = \mathop {\arg \max }\limits_{\emph{\textbf{a}}_i \in \mathcal {A}_i(\emph{\textbf{s}})} Q_i(\emph{\textbf{s}},\emph{\textbf{a}}_i).
\end{equation}
Therefore, to obtain the optimal policy $\pi^*_i$, we have to specify the Q-function $Q_i(\emph{\textbf{s}},\emph{\textbf{a}}_i)$.

Due to infinite state-action space, we can hardly obtain exact Q-functions for each UAV. Instead, we adopt two deep neural networks~(DNNs), including an actor network and a critic network, to approximate Q-functions for each UAV. More explicitly, the actor network of the $i$-th UAV, denoted by $\mu_i$,  determines its action $\emph{\textbf{a}}_i $ based on the current state $\emph{\textbf{s}}$, i.e., $ \emph{\textbf{a}}_i = \mu_i(\emph{\textbf{s}};\Theta^{\mu}_i)$, in which $\Theta^{\mu}_i$ is the weight of the actor network. In addition, the critic network of the $i$-th UAV, denoted by $Q_i$, approximates its Q-value given the current state $\emph{\textbf{s}}$ and the determined action $\emph{\textbf{a}}_i$, i.e., $ Q_i(\emph{\textbf{s}},\emph{\textbf{a}}_i;\Theta^Q_i)$, with $\Theta^Q_i$ being the weight of the critic network. By this means, when these two networks are well-trained, the UAV's policy at any state is given by the output of the actor network, whose Q-value is evaluated by the output of the critic network.

\subsection{Algorithm Training}

In order to train the actor and the critic networks, UAVs have to record their experience as training samples. As UAVs perform their sensing tasks in a sequence of cycles, each sample should contain the experience of UAVs in a whole cycle. Therefore, different from traditional DDPG algorithm in which the training sample is generated after each state transition, in our system, the training sample is generated only after each cycle. When the $i$-th UAV finishes a cycle, it will generate a sample specified by a tuple $<\tilde{\emph{\textbf{s}}},\tilde{\emph{\textbf{a}}}_i,\tilde{r}_i,\tilde{\emph{\textbf{s}}}'>$, which contains the initial state of the current cycle $\tilde{\textbf{\emph{s}}}$, the action taken in the current cycle $\tilde{\textbf{\emph{a}}}_i$, the accumulated reward within the current cycle $\tilde{r}_i$, and the initial state of the next cycle $\tilde{\textbf{\emph{s}}}'$.

Besides, to suppress the temporal correlation among  training samples, we utilize the \emph{experience replay}~\cite{HG-2018} to generate training sets. More explicitly, we store all of the training samples of the $i$-th UAV in a replay memory, denoted by $\mathcal {RM}_i$, whose maximum size is denoted by $N_{rm}$. When the number of training samples in a replay memory exceeds the maximum size, the fresh samples will replace the out-of-date ones. Before each time of training, we randomly choose a mini-batch with $N_{mini}$ samples from replay memory $\mathcal {RM}_i$ as the training set for the $i$-th UAV, denoted by $\mathcal {D}_i$.

In the training of the actor and the critic networks for the $i$-th UAV, i.e., $\mu_i$ and $Q_i$, we adopt the two separate networks, also known as the \emph{target networks}~\cite{nature-2015}, to generate the training targets. Specifically, the target actor network and the target critic network for the $i$-th UAV are denoted by $\widehat{\mu}_i$ and $\widehat{Q}_i$, whose weights are $\Theta^{\widehat{\mu}}_i$ and $\Theta^{\widehat{Q}}_i$, respectively. To train actor network $\mu_i$, we take steps in the gradient direction of a performance evaluation function $J_i(\Theta^{\mu}_i)$, i.e.,
\begin{equation}\label{actor update}
\Theta^{\mu}_i \leftarrow \Theta^{\mu}_i + \alpha \nabla_{\Theta^{\mu}_i } J_i(\Theta^{\mu}_i).
\end{equation}
Based on~\cite{TJANTYDD-2016}, the gradient $\nabla_{\Theta^{\mu}_i } J_i(\Theta^{\mu}_i)$ can be calculated by
\begin{equation}
\nabla_{\Theta^{\mu}_i } J_i(\Theta^{\mu}_i) = \mathbb{E}_{(\tilde{\textbf{\emph{s}}},\tilde{\textbf{\emph{a}}}_i) \sim \mathcal {D}_i}\left[ \nabla_{ \emph{\textbf{a}}_i } Q_i(\emph{\textbf{s}},\emph{\textbf{a}}_i;\Theta^Q_i )|_{\emph{\textbf{s}} = \tilde{\emph{\textbf{s}}}_i, \emph{\textbf{a}}_i = \tilde{\emph{\textbf{a}}}_i }  \cdot \nabla_{ \Theta^{\mu}_i  } \mu_i(\emph{\textbf{s}};\Theta^{\mu}_i ) |_{\emph{\textbf{s}} = \tilde{\emph{\textbf{s}}}_i}  \right].
\end{equation}
In addition, we train critic network $Q_i$ by taking steps in the gradient direction of a loss function $L_i(\Theta^Q_i)$, i.e.,
\begin{equation}\label{critic update}
\Theta^Q_i \leftarrow \Theta^Q_i + \alpha \nabla_{\Theta^Q_i } L_i(\Theta^Q_i).
\end{equation}
Here, the loss function $L_i(\Theta^Q_i)$ is expressed as
\begin{equation}
L_i(\Theta^Q_i) = \mathbb{E}_{(\tilde{\textbf{\emph{s}}},\tilde{\textbf{\emph{a}}}_i,\tilde{r}_i,\tilde{\emph{\textbf{s}}}') \sim \mathcal {D}_i} \left[  ( y_i - Q_i(\tilde{\emph{\textbf{s}}},\tilde{\emph{\textbf{a}}}_i;\Theta^Q_i ) )^2  \right],
\end{equation}
in which the target $y_i$ is given by
\begin{equation}
y_i = \tilde{r}_i + \widehat{Q}_i(\tilde{\emph{\textbf{s}}}',\widehat{\mu}_i(\tilde{\emph{\textbf{s}}}';\Theta^{\widehat{\mu}}_i);\Theta^{\widehat{Q}}_i  ).
\end{equation}
It is worth mentioning that target networks $\widehat{\mu}_i$ and $\widehat{Q}_i$ are not trained through the above methods. Instead, we update the weights of target networks by following rules:
\begin{equation}\label{target_actor_update}
\Theta^{\widehat{\mu}}_i \leftarrow \nu \Theta^{\mu}_i + (1-\nu) \Theta^{\widehat{\mu}}_i,
\end{equation}
\begin{equation}\label{target_critic_update}
\Theta^{\widehat{Q}}_i  \leftarrow \nu \Theta^{Q}_i + (1-\nu) \Theta^{\widehat{Q}}_i,
\end{equation}
in which $\nu$ is the update rate.

We present our proposed DDPG-based multi-UAV trajectory design algorithm in Algorithm~\ref{Trajectory Design Algorithm}. For the sake of \emph{exploration}~\cite{TJANTYDD-2016}, we construct the exploration policy of the $i$-th UAV, denoted by $\mu'_i$, by adding noise sampled from a stochastic process $\mathcal {E}$ to actor policy $\mu_i$. Specifically, at state $\emph{\textbf{s}}$, the exploration policy of the $i$-th UAV is expressed by
\begin{equation}\label{action_selection}
\mu'_i(\emph{\textbf{s}}) = \mu_i(\emph{\textbf{s}};\Theta^{\mu}_i) + \varepsilon.
\end{equation}
Here, $\varepsilon$ is the noise sampled from stochastic process $\mathcal {E}$, which is generated by the Ornstein-Uhlenbeck process~\cite{GL-1930}.

\begin{algorithm}[!t]
  \caption{DDPG-based multi-UAV trajectory design algorithm of the $i$-th UAV.}
  \label{Trajectory Design Algorithm}
  \begin{algorithmic}[1]
    \REQUIRE
	    The structures of actor network, critic network, and their target networks; Number of episodes $N_{epi}$; Number of frames per episode $N_f$; Initial state $\textbf{\emph{s}}$.
    \ENSURE
        Policy $\pi_i$;
    \STATE Initialize all networks;
    \STATE Initialize replay memory $\mathcal {RM}_i$;
    \FOR{$n_{epi}= 1 : N_{epi}$ }
    \FOR{$n = 1 : N_f$ }
    \STATE Observe the current state $\textbf{\emph{s}}$;
    \IF{$\tau_i = 0$}
    \STATE Select action according to~(\ref{action_selection});
    \ELSE
    \STATE Keep on the current action;
    \ENDIF
    \STATE Observe reward $r_i$, and transit to the next state $\emph{\textbf{s}}'$;
    \STATE Accumulate reward $\tilde{r}_i = \tilde{r}_i + r_i$;
    \IF{$\tau_i' = 0$}
    \STATE Store the experience into replay memory $\mathcal {RM}_i$;
    \STATE Set reward $\tilde{r}_i = 0$;
    \STATE Sample a mini-batch $\mathcal {D}_i$ from replay memory $\mathcal {RM}_i$;
    \STATE Train actor and critic networks according to~(\ref{actor update}) and~(\ref{critic update});
    \STATE Update target networks according to~(\ref{target_actor_update}) and~(\ref{target_critic_update}).
    \ENDIF
    \ENDFOR
	\ENDFOR
	\end{algorithmic}
\end{algorithm}

\section{System Performance Analysis}
\label{System Performance Analysis}

In this section, we first analyze the convergency and the complexity of our proposed multi-UAV trajectory design algorithm, and then remark some properties on the AoIs of UAVs.

\subsection{Algorithm Analysis}
\subsubsection{Convergency}
In Algorithm~\ref{Trajectory Design Algorithm}, we adopt the gradient descend method to train actor network $\mu_i$ and critic network $Q_i$, in which the learning rate is exponentially decayed with iterations. Therefore,  the weights $\Theta^{\mu}_i$ and $\Theta^Q_i$ will converge after a finite number of iterations, which guarantees the convergency of our algorithm. Actually, as referred in~\cite{UWC-2019}, the convergency of a neural network can hardly be theoretically analyzed before training. The reason lies in that the convergence of a neural network is highly dependent on the hyperparameters during the training process, in which the quantitative relationship between the network convergency and the hyperparameters is sophisticated. Instead, in our paper, we  show the convergency of our algorithm through simulation.

\subsubsection{Complexity}
The time complexity for training actor network $\mu_i$ and critic network $Q_i$ is determined by the number of operations in each iteration during the update. Assume that a network has $N_h$ hidden layers, whose numbers of neurons are denoted by $q_i$, $i=1,...,N_h$. As such, the time complexity in each iteration can be given by $\mathcal {O}\left(\sum_{i=1}^{N_h-1} q_i q_{i+1}\right)$. Here, we ignore the operations at input and output layers, as their numbers are trivial compared with that at the hidden layers. When all hidden layers in the network have the same amount of neurons, denoted by $q$, the time complexity can be reduced to $\mathcal {O}\left((N_h-1) q^2\right) = \mathcal {O}\left(q^2\right)$.

\subsection{AoI Analysis}
In the system, the AoIs of UAVs are jointly determined by two main factors, namely the valid data size requirement $D^V$ and the number of subchannels $K$.  In what follows, we will introduce two propositions to analyze the effects of them on the average AoI in the system.

\begin{proposition}\label{Remark1}
If the number of subchannel $K$ is given, the increase of valid data size requirement $D^V$ approximately leads to a linear increase of the average AoI.

\end{proposition}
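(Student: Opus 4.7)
The plan is to decompose a cycle into its time components, show each is either constant in $D^V$ or linear in $D^V$, and then convert cycle duration into average AoI.

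First, I would fix $K$ and consider the $i$-th UAV operating under a (near-)stationary optimal policy $\pi_i^*$. Each cycle consists of four segments: (a) the flight from the previous transmission location to the new sensing location, taking $\lceil \|\emph{\textbf{x}}_i^{S}-\emph{\textbf{x}}_i^{T,\text{prev}}\|_2/(v_{\max}t_f)\rceil$ frames, (b) one sensing frame, (c) the flight from the sensing location to the transmission location, and (d) hovering at the transmission location until all expected data of size $D^S_i = D^V/\mathcal{P}^{ss}_i(\emph{\textbf{x}}_i^S)$ has been delivered. Segments (a)--(c) depend on the geometry determined by the policy but not directly on $D^V$. For segment (d), since expected per-frame throughput $ER_{i,k}$ is determined by locations and co-channel interference (both geometry-driven), the number of transmission frames needed is approximately $D^V/(\mathcal{P}^{ss}_i \cdot ER_{i,k} \cdot t_f)$, which is linear in $D^V$. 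Hence the cycle length can be written as
\begin{equation}
T_c^{(i)}(D^V) \approx c_0^{(i)} + c_1^{(i)} D^V, \qquad c_1^{(i)} = \frac{1}{\mathcal{P}^{ss}_i \cdot ER_{i,k} \cdot t_f},
\end{equation}
where $c_0^{(i)}$ aggregates the $D^V$-independent flight and sensing overheads.

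Next, I would convert cycle duration to average AoI. Because the AoI grows by one each frame and resets to zero at cycle completion (see (\ref{AoI}) and Fig.~\ref{AoI_function}), the sum of AoIs over one cycle of length $T_c^{(i)}$ is $T_c^{(i)}(T_c^{(i)}+1)/2$, so the time-averaged AoI of the $i$-th UAV over a long horizon is $\tfrac{1}{2}T_c^{(i)} + O(1)$. Averaging over the $M+N$ UAVs and substituting the affine expression for $T_c^{(i)}$ yields
\begin{equation}
\frac{1}{M+N}\sum_{i\in\mathcal{M}\cup\mathcal{N}} \overline{\tau}_i \approx \frac{1}{2(M+N)}\sum_{i\in\mathcal{M}\cup\mathcal{N}} c_0^{(i)} + \left(\frac{1}{2(M+N)}\sum_{i\in\mathcal{M}\cup\mathcal{N}} c_1^{(i)}\right) D^V,
\end{equation}
which is affine, and in particular approximately linear, in $D^V$.

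The main obstacle, and the reason the proposition says \emph{approximately} rather than exactly, is that the optimal policy itself can shift with $D^V$. As $D^V$ grows, a UAV may prefer transmission locations with higher $ER_{i,k}$ at the cost of longer ferry distance, slowly changing both $c_0^{(i)}$ and $c_1^{(i)}$ with $D^V$; moreover, the underlay subchannel allocation in Section~\ref{Joint Sensing and Transmission Protocol} can rebalance as relative transmission demands change. I would argue that within the operating regime considered (where sensing geometry is already near-optimal and transmission time dominates the cycle), these second-order policy adjustments are small, so $c_1^{(i)}$ is essentially constant and the linear term dominates. The argument can be supported by noting that $c_1^{(i)}$ depends on $D^V$ only implicitly through the policy, while the explicit $c_1^{(i)} D^V$ scaling is unbounded as $D^V$ grows, so linearity must hold asymptotically. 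Exact quantification of the small deviation would require sensitivity analysis of the DDPG-learned policy, which is infeasible analytically; therefore the proposition is stated as an approximate result and can be corroborated by the simulation in Section~\ref{Simulation Result}.
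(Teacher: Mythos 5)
Your proof is correct and follows essentially the same route as the paper: both treat the expected throughput as (approximately) independent of $D^V$, deduce that the cycle duration $N_c$ grows linearly in $D^V$, and convert the quadratic per-cycle AoI sum $\frac{N_c^2+N_c}{2}$ into a time-averaged AoI of $\frac{N_c+1}{2}$. Your version is slightly more careful in isolating the $D^V$-independent flight/sensing overhead $c_0^{(i)}$ as an additive constant and in explaining why the policy's dependence on $D^V$ makes the result only approximate, but the substance of the argument is identical to the paper's.
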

\begin{proof}
According to the protocol in Section~\ref{Joint Sensing and Transmission Protocol}, the expected throughput of a UAV is determined by the current locations of all UAVs, which is hardly influenced by $D^V$. When $D^V$ increases, the expected throughput of UAVs can be approximately regarded as constant, and thus the average duration of performing a cycle, denoted by $N_c$, will linearly increase, i.e., $N_c \propto D^V$. Moreover, based on the definition of AoI in Section~\ref{System Model}, the sum AoI of each UAV in a cycle can be expressed as a quadratic function of $N_c$, i.e., $\sum^{N_c}_{n=1} n = \frac{N_c^2+N_c}{2}$. Then, we can approximate the average AoI given in problem~(P1) by
\begin{equation}
\frac{1}{N_f}\sum^{N_f}_{ n = 1 }\tau_i ^{(n)}  \approx \frac{1}{N_f} \cdot \frac{N_f}{N_c}  \cdot  \frac{N_c^2+N_c}{2} = \frac{N_c+1}{2} \propto D^V.
\end{equation}
Therefore, we can conclude that the average AoI linearly increases with $D^V$.
\end{proof}

\begin{proposition}\label{Remark2}
If the valid data size requirement $D^V$ is given, the average AoI decreases with the number of subchannels $K$, and then becomes saturated. Specifically, when $K$ is below $\max\{M,{N}/{V_s}\}$, the average AoI sharply decreases with $K$. As $K$ increases from $\max\{M,{N}/{V_s}\}$ to $(M+N)$, the average AoI slightly decreases with $K$. When $K$ is larger than $(M+N)$, the average AoI remains unchanged.
\end{proposition}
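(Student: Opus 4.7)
The plan is to decompose the effect of $K$ on the average AoI into two mechanisms: \emph{blocking}, i.e., whether every active UAV can be granted a subchannel in each frame, and \emph{interference}, i.e., how the co-channel SINR, and hence the expected throughput in Propositions~\ref{proposition2} and~\ref{proposition4}, varies with how crowded each subchannel is. Two critical values of $K$ arise from the allocation constraints of Section~\ref{Joint Sensing and Transmission Protocol}, and I would analyse each of the resulting three regimes separately.

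First, I would read off from those constraints, namely $\sum_{i\in\mathcal M}\phi_{i,k}\le 1$ and $\sum_{i\in\mathcal N}\phi_{i,k}\le V_s$ together with $\sum_k\phi_{i,k}\le 1$, that $K_{\min}=\max\{M,\,N/V_s\}$ subchannels are both necessary and sufficient for every active UAV to be granted a subchannel in any given frame, while $K_{\mathrm{full}}=M+N$ subchannels are needed to give every UAV an exclusive subchannel and so eliminate all co-channel interference.

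In the regime $K<K_{\min}$, at least one active UAV is blocked each frame, so its remaining data size cannot be decreased and its AoI keeps accumulating. Increasing $K$ by one both reduces the number of blocked UAVs and thins the interferer sets in~(\ref{U2N_SNR}) and~(\ref{U2D_SNR}), so each $ER_{i,k}$ rises and the transmission stage shortens. Both effects pull the average cycle length $N_c$ down simultaneously, so by the heuristic in the proof of Proposition~\ref{Remark1} (average AoI $\approx (N_c+1)/2$) the AoI drops sharply. In the regime $K_{\min}\le K\le K_{\mathrm{full}}$, blocking is already eliminated and only the interference mechanism remains: spreading U2D UAVs across more subchannels and unpacking U2N/U2D pairings monotonically improves each $ER_{i,k}$, but strictly less than the combined gain above, hence a mild decrease. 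For $K>K_{\mathrm{full}}$, any feasible $\Phi$ leaves at least $K-(M+N)$ subchannels idle because each UAV can use at most one; neither the interferer sums nor the blocking set changes with $K$, so $N_c$ and the AoI are invariant.

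The main obstacle I anticipate is making precise the strict comparison between the slopes of the first and second regimes, because this requires quantifying how $ER_{i,k}$ responds to removing a single co-channel interferer, which depends on the joint Rice/Rayleigh fading distribution and on the DDPG-optimised trajectories. I would settle for a qualitative monotonicity argument via stochastic dominance of the aggregate interference variable $\chi$ introduced in Proposition~\ref{proposition1}, together with the fact that the blocking contribution to $N_c$ vanishes once $K\ge K_{\min}$, rather than attempting a closed-form rate.
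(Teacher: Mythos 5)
Your proposal is correct and follows essentially the same route as the paper's proof: the same two thresholds $\max\{M,N/V_s\}$ and $M+N$, the same three regimes, and the same qualitative mechanisms (subchannel blocking below the first threshold, interference reduction between the two, and saturation beyond the second). Your additions — attributing the steeper first-regime slope to the superposition of both mechanisms, tying the argument back to the cycle-length heuristic of Proposition~\ref{Remark1}, and flagging stochastic dominance of $\chi$ as the way to formalize the throughput monotonicity — are refinements the paper does not make, but they do not change the underlying argument.
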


\begin{proof}
According to the subchannel allocation mechanism in Section~\ref{Joint Sensing and Transmission Protocol}, a subchannel can only be utilized for one U2N link and $V_s$ U2D links at a time. When $K$ is below $\max\{M,{N}/{V_s}\}$, there exist UAVs cannot be allocated to subchannels, and thus they cannot transmit the sensory data timely, leading to an extremely high average AoI. In this case, a higher $K$ may lead to fewer UAVs without the allocated subchannels, which leading to a much lower average AoI. In addition, when $K$ exceeds $\max\{M,{N}/{V_s}\}$, all UAVs can be allocated to subchannels, and thus the average AoI changes much slighter with $K$ than the case where $K<\max\{M,{N}/{V_s}\}$. As the increase of $K$, the mutual interference among UAVs decreases, since more subchannels can be utilized. Consequently, the expected throughput of UAVs increases, resulting in the decrease of average AoI. Moreover, when $K$ reaches to $(M+N)$, each U2N or U2D link can exclusively occupy one subchannel, and thus the mutual interference among UAVs can be avoided. After that, the increase of $K$ does not improve the expected throughput of UAVs. Therefore, the average AoI remains unchanged when $K$ exceeds $(M+N)$.
\end{proof}

\section{Simulation Result}
\label{Simulation Result}

\begin{table}[!t]
\centering
\caption{Parameters for Simulation} \label{parameters}
\vspace{-1mm}
\begin{tabular}{|p{2.2in}|p{0.6in}|p{2.0in}|p{0.8in}|}
 \hline
 \textbf{Parameters} & \textbf{Values} & \textbf{Parameters} & \textbf{Values} \\
 \hline
 \hline
  Number of UAVs in the U2N mode $M$ & 5 & Maximum sensing angle $\varphi$ & 30$^\circ$ \\
  \hline
  Number of UAVs in the U2D mode $N$ & 5 & Sensing factor $\lambda$ & 0.005 (s$\cdot$m)$^{-1}$\\
  \hline
  Number of subchannels $K$ & 5 & QoS requirement $R_{th}$ & 1 bps/Hz\\
  \hline
  Transmit power of UAVs $P^u$ & 10 dBm &  Valid data size requirement $D^V$ & 10 bit/Hz\\
  \hline
  Noise power $N_0$ & -85 dBm &  Maximum U2D links per subchannel $V_s$ & 3\\
  \hline
  Carrier frequency $f_c$ & 2 GHz &  Number of episodes $N_{epi}$ & 500\\
  \hline
  Height of the BS $H_0$ & 10 m & Number of frames per episode $N_f$ & 300\\
  \hline
  Minimum flying altitude of UAVs $h_{min}$ & 50 m & Size of mini-batch $N_{mini}$ & 64\\
  \hline
  Maximum flying altitude of UAVs $h_{max}$ & 150 m & Size of replay memory $N_{rm}$ & 10000\\
  \hline
  Maximum flying speed of UAVs $v_{max}$ & 15 m/s &  Learning rate $\alpha$ & 0.001\\
  \hline
  Duration of a frame $t_f$ & 1 s & Update rate of target networks $\nu$ & 0.9 \\
\hline
\end{tabular}
\vspace{-1mm}
\end{table}

In this section, we present the simulation results on the AoI minimization for UAVs in the system. The simulation parameters are based on the existing 3GPP technical reports~\cite{3GPP_TR_36_777,3GPP_TR_38_901}, which are given in Table~\ref{parameters}.

In the simulation, we model the cell as a circular area whose center is the BS and the radius is $500~\textrm{m}$. The sensing targets and mobile devices are randomly distributed within the cell, as referred in~\cite{WZTMA-2009}. The initial positions of UAVs in the U2N mode are on the BS with the altitude of $100~\textrm{m}$, while those of UAVs in the U2D mode are on their corresponding mobile devices with the same altitude. Moreover, we set each actor or critic network as a four-layer neural network with two hidden layers~\cite{YZFZ-2018}, in which the numbers of neurons in the two hidden layers are $200$ and $100$, accordingly. During the training of DNNs, we adopt the rectified linear unit~(ReLU) function, defined by $f_{ReLU}(x) = \textrm{max}\{0,x\}$, as the activation function~\cite{GTG-2013}. The learning rate are set as exponentially decayed to improve the performance of training~\cite{ZPHJ-2012}.

\begin{figure}[!t]
\centering
\vspace{-3mm}
\includegraphics[width=4.2in]{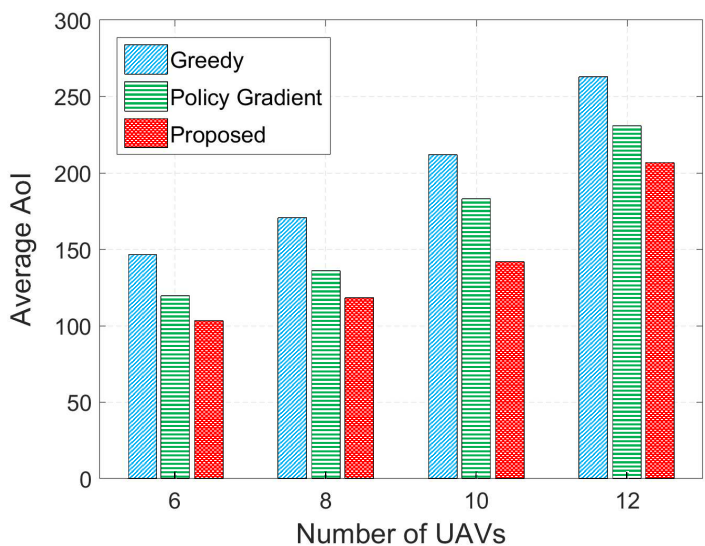}
\vspace{-7mm}
\caption{Performance comparison of different algorithms, given different numbers of UAVs in the system. }
\vspace{-7mm}
\label{Plot_Performance}
\end{figure}

In Fig.~\ref{Plot_Performance}, we compare the performance of our proposed algorithm with the following two algorithms:
\begin{itemize}
  \item \textbf{Greedy algorithm}: Each UAV determines its sensing and transmission locations to maximize its SSP and STP.
  \item \textbf{Policy gradient algorithm}~\cite{RDSY-1999}: Each UAV directly optimizes its parameterized control policy by a variant of gradient descent.
\end{itemize}
For all cases, we assume that the numbers of UAVs in the U2N and the U2D modes are identical. As is shown in the figure, our proposed algorithm can obtain a lower average AoI in the system than the greedy algorithm and the policy gradient algorithm. In addition, for each algorithm, the average AoI increases with the number of UAVs, since more sensing tasks to be completed.

\begin{figure}[!t]
\centering
\vspace{-3mm}
\includegraphics[width=4.2in]{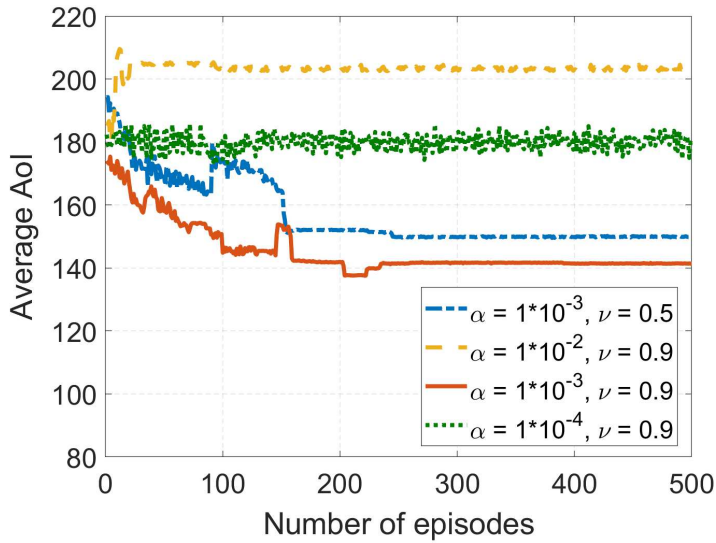}
\vspace{-7mm}
\caption{Convergency of the proposed algorithm, given different learning rates $\alpha$ and update rates of target networks $\nu$.}
\vspace{-7mm}
\label{Plot_Convergency}
\end{figure}

Fig.~\ref{Plot_Convergency} shows the convergency of our proposed algorithm with different learning rates~$\alpha$ and update rates of target networks~$\nu$. Under appropriate hyperparameters, e.g., $\alpha = 1\times10^{-3}$ and $\nu = 0.9$, our proposed algorithm converges after $250$ episodes with satisfied performance. When the learning rate is too large, e.g, $\alpha = 1\times10^{-2}$, the algorithm converges quickly while its performance cannot be guaranteed. On the other hand, when  the learning rate is too small, e.g, $\alpha = 1\times10^{-4}$, the duration of convergence is quite long. Moreover, the algorithm can achieve a lower average AoI when the update rate of target networks is larger.

\begin{figure}[!t]
\centering
\includegraphics[width=6.4in]{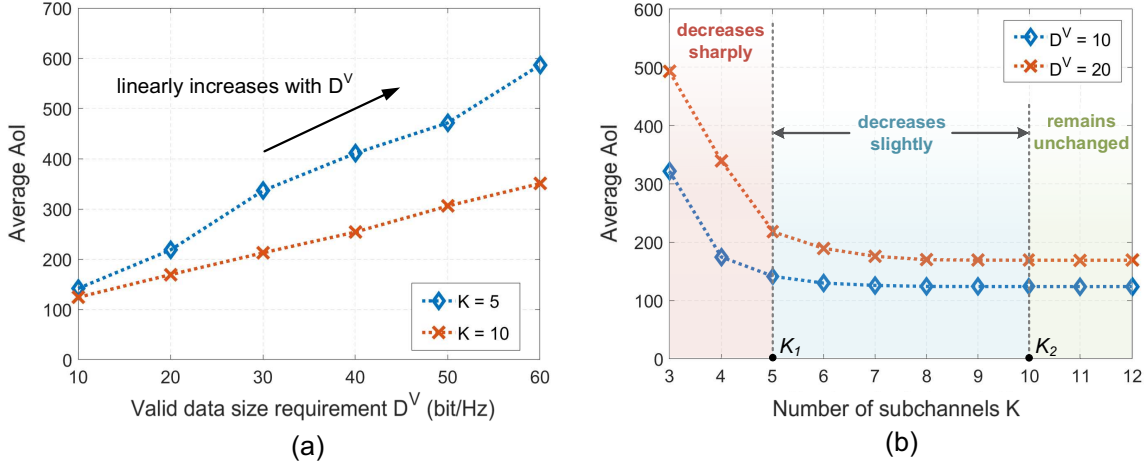}
\vspace{-9mm}
\caption{(a)~The average AoI versus valid data size requirement $D^V$, given different numbers of subchannels $K$;~(b)~The average AoI versus subchannels $K$, given different numbers of valid data size requirement $D^V$, in which $K_1 = \max\{M,{N}/{V_s}\} = 5$ and $K_2 = M+N = 10$. }
\vspace{-7mm}
\label{Plot_Data_Subchannel}
\end{figure}

Fig.~\ref{Plot_Data_Subchannel} presents the effects from UAV sensing and transmission on the average AoI in the system, which justifies the AoI analysis in Section~\ref{System Performance Analysis}. Specifically, in Fig.~\ref{Plot_Data_Subchannel}(a), we plot the average AoI versus valid data size requirement $D^V$, given different numbers of subchannels $K$. For each case, we can observe that the average AoI linearly increases with $D^V$, which is consistent with Proposition~\ref{Remark1}. Besides, a higher value of $K$ leads to a lower average AoI, as more spectrum resource can be utilized. Furthermore, in Fig.~\ref{Plot_Data_Subchannel}(b), we show the average AoI versus the number of subchannels $K$, given different numbers of valid data size requirement $D^V$. In either case,  when $K$ is smaller than $5$, the average AoI is dramatically decreases with $K$. When $K$ increases from $5$ to $10$, the average AoI decreases gently. As $K$ exceeds $10$, the average AoI remains unchanged. This is consistent with Proposition~\ref{Remark2}. In addition, when $D^V$ gets larger, the average AoI becomes higher, since more sensory data needs to be collected.

\begin{figure}[!t]
\centering
\includegraphics[width=6.4in]{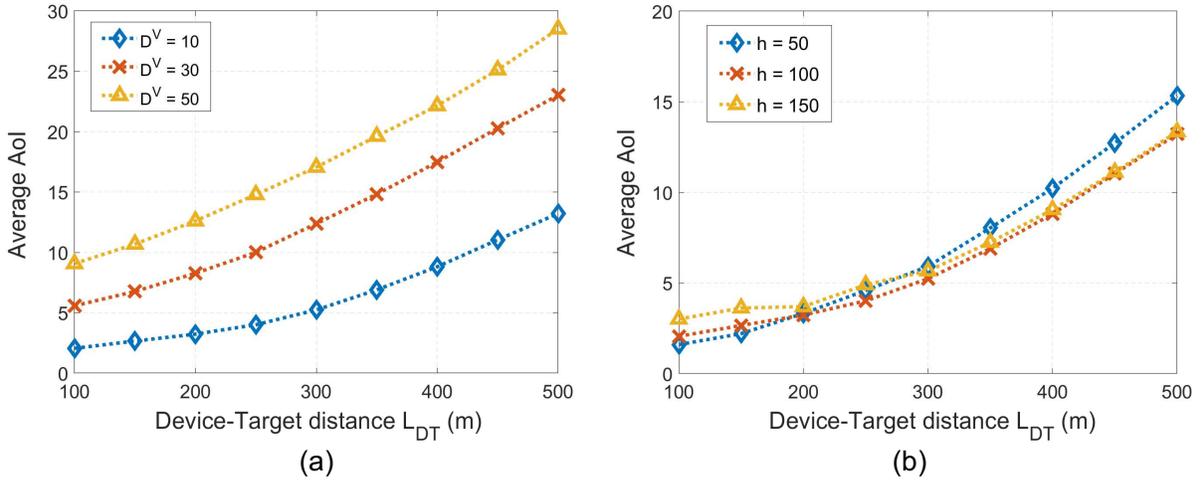}
\vspace{-7mm}
\caption{(a)~The average AoI of a UAV in the U2D mode versus the device-target distance~$L_{DT}$, given different numbers of valid data size requirement $D^V$, with UAV flying altitude $h = 100$~m;~(b)~The average AoI of a UAV in the U2D mode versus the device-target distance~$L_{DT}$, given different numbers of UAV flying altitude $h$, with valid data size requirement $D^V = 10$~bit/Hz.}
\vspace{-7mm}
\label{Plot_Distance}
\end{figure}

In Fig.~\ref{Plot_Distance}, we show the average AoI of a UAV in the U2D mode\footnote{This simulation result is also appropriate for a UAV in the U2N mode if we replace the mobile device by the BS.} versus the distance from its mobile device to its target~$L_{DT}$, with different numbers of valid data size requirement $D^V$ and UAV flying altitude $h$. In the simulation, we assume that the UAV flies along the straight line between its mobile device and its target at the fixed altitude $h$. Besides, we assume that the interference from co-channel UAVs is controllable and can be omitted for simplicity. The average AoI is obtained by simulation, in which we enumerate all possible sensing and transmission locations and find the ones with the lowest AoI. In Fig.~\ref{Plot_Distance}(a), we can find out that the average AoI increases superlinearly with $L_{DT}$, since the time spend on the flight increases. In addition, as the increase of $D^V$,  the curves become more linear. From Fig.~\ref{Plot_Distance}(b), we can also observe that, for any flying altitude, the average AoI increases superlinearly as $L_{DT}$ gets larger. Besides, when the UAV's mobile device is close to its target, it is more appropriate for the UAV to fly at a low altitude. As the device-target distance becomes larger, the UAV tends to properly increase its flying altitude.

\section{Conclusion}
\label{Conclusion}
In this paper, we have proposed the underlaying U2D communications in a cellular Internet of UAVs, and studied the AoI minimization problem in this network. We have designed a joint sensing and transmission protocol to schedule multiple UAVs performing sensing tasks. Since the AoI minimization problem can be regarded as a MDP, we have formulated this problem by multi-agent DRL, and proposed a DDPG-based multi-UAV trajectory design algorithm to solve this problem. Simulation results have shown that our proposed algorithm outperforms the greedy algorithm and the policy gradient algorithm. Two conclusions on the AoI can be drawn from the simulation results. First, the UAV's AoI linearly increases with the sensory data demand. Second, the UAV's AoI decreases with the number of subchannels, and then becomes saturated.

\begin{appendices}
\section{Proof of Proposition 1}\label{proof1}
Given the threshold $R_{th}$, we can express the STP for the $i$-th UAV~$( i \in \mathcal {M} )$ over the $k$-th subchannel as
\begin{align}\label{STP_derivation}
\mathcal{P}^{st}_{i,k} \left( R_{th} \right) &= \mathcal{P}\left\{ {{\log }_2}\left( {1 + {\gamma _{i,k}}} \right) > R_{th} \right\} \nonumber \\
&= \mathcal{P}\left\{ {\gamma _{i,k}} > {2^{R_{th} }} - 1\right\}  \nonumber \\
&\mathop = \limits^{(a)}  \mathcal{P}\left\{ \zeta_i  > \frac{N_0\kappa }{P^u g_i } + \sum\limits_{j \in \mathcal {N}}\phi_{j,k} \frac{ \kappa }{ g_i }g_j\zeta_j \right\} \nonumber \\
&\mathop = \limits^{(b)}  \mathcal{P}^{LoS}_i \cdot [1-F_{ri}(\chi)]+ \mathcal{P}^{NLoS}_i \cdot [1-F_{ra}(\chi)].
\end{align}
Here, equation~(a) holds because the subchannel allocation indicator $\phi_{i,k}$ equals to one if the $i$-th UAV is assigned to the $k$-th subchannel, and we define $\kappa = 2^{R_{th}}-1$ for simplicity. Besides, equation~(b) is due to that $\zeta_i$ obeys Rice distribution when the LoS component exists, and follows Rayleigh distribution when the NLoS component exists. To be specific, $F_{ri}(\chi) = 1 - Q_1(\sqrt{2K_{ri}},\chi \sqrt{2(K_{ri}+1)})$ denotes the cumulative distribution function~(CDF) of the Rice distribution with $\Omega = 1$~\cite{S-1944}, $F_{ra}(\chi) = 1 - e^{-\chi^2/2}$ denotes the CDF of the Rayleigh distribution with unit variance, and $Q_1(x)$ is the Marcum Q-function of order 1~\cite{J-1950}. We define $\chi = A_i + \sum_{j \in \mathcal {N}} \phi_{j,k} B_i \chi_j$, in which $ A_i = \frac{N_0\kappa }{P^u g_i }$, $B_i = \frac{ \kappa }{g_i}$, and $\chi_j = g_j \zeta_j$~$(j\in\mathcal {N})$. Note that $A_i$ and $B_i$ can be regarded as constants if the location of the $i$-th UAV is given. Therefore, the PDF of $\chi$, denoted by $f_\chi(x)$, can be derived from the PDFs of $\chi_j$.

Due to the mutual independence among small-scale fading coefficients, $\chi_j$~$(j\in\mathcal {N})$ are independent with each other. Thus, the PDF of $\chi$ can be given by $f_\chi(x) = g(x-A_i)$, where $ g(x) = g_{\mathcal {W}[1]}(x) \ast \ldots \ast g_{\mathcal {W} [N_w]}(x)$. Here, $\mathcal {W} = \left\{ j \in \mathcal {N} | \phi_{j,k}=1  \right\}$ is a set with size $N_w$, where $\mathcal {W}[w]$ denote the $w$-th element in $\mathcal {W}$~$(w=1,...,N_w)$. Moreover,  $g_j(x)= ({1}/{B_i})  f_{\chi_j}({x}/{B_i})$,~$(j \in \mathcal {W})$, in which $f_{\chi_j}(y)= \mathcal{P}^{LoS}_j \cdot [({1}/{g_j}) f_{ri}(y/{g_j})] + \mathcal{P}^{NLoS}_j \cdot [({1}/{g_j}) f_{ra}(y/{g_j})]$, with $f_{ri}(y)= 2\left(K_{ri}+1\right)y e^{ -\left(K_{ri}+1\right)y^2-K_{ri} } \cdot I_0\left(2\sqrt{(K_{ri}+1)K_{ri}} y\right)$ and $f_{ra}(y)=y e^{-y^2/2}$. Note that $f_\chi(x)$ is the multiple convolutions of Rice and Rayleigh PDFs, which is quite complex. As such, we can hardly derive the close-form expression of $ f_\chi(x)$. Instead, we will obtain the numerical result of $f_\chi(x)$ by simulation. Finally, we can calculate the expected value of $\mathcal{P}^{st}_{i,k} \left( R_{th} \right)$ based on $f_\chi(x)$, which ends the proof.

\section{Proof of Proposition 2}\label{proof2}
As the data transmission will fail if the throughput is lower than the given threshold $R_{th}$, the expected throughput for the $i$-th UAV over the $k$-th subchannel, defined as $ER_{i,k} = \mathbb{E} \{ R_{i,k} \}$, can be calculated by
\begin{align}
ER_{i,k} &= \int_{R_{th}}^\infty {  r \cdot f_R(r)  } dr \nonumber \\
&= \int_{R_{th}}^\infty {  r  } d[ F_R(r) ] \nonumber \\
&=  \left. {r \cdot F_R(r) }\right|_{R_{th}}^\infty  - \int_{R_{th}}^\infty  { F_R(r)  } dr  \nonumber \\
&=  \left. {r \cdot F_R(r) } \right|_0^\infty  -  R_{th} \cdot F_R(R_{th})  - \int_{R_{th}}^\infty  { F_R(r)  } dr  \nonumber \\ 
&\mathop = \limits^{(a)}   \int_{R_{th}}^\infty  [ 1-F_R(r) ] dr  + R_{th} \cdot [ 1 - F_R(R_{th}) ] \nonumber \\
&\mathop = \limits^{(b)}  \int_{R_{th}}^\infty \mathcal{P}^{st}_{i,k} \left( r \right) dr  +  R_{th} \cdot \mathcal{P}^{st}_{i,k} \left( R_{th} \right)
\end{align}
in which  $f_R(r)$ and $F_R(r)$ are defined as the PDF and the CDF of $R_{i,k}$, respectively. Here, equation~(a) can be obtained from $\left. {r \cdot F_R(r) } \right|_0^\infty = \left. r \right|_0^\infty = \int_0^{\infty} { 1  } dr = \int_0^{R_{th}}  { 1  } dr  + \int_{R_{th}}^{\infty}  { 1  } dr$. Besides,  equation~(b) holds due to that $1-F_R(r) = \mathcal {P}\{ R_{i,k}>r \} = \mathcal{P}^{st}_{i,k}(r)$. Then, the proof ends.

\end{appendices}


\end{document}